\documentclass{article}
\usepackage{todonotes}
\usepackage{geometry}
\usepackage{amssymb,bm}
\usepackage{url}
\usepackage{booktabs}
\usepackage{algorithm}
\usepackage{algorithmic}
\usepackage[numbers,sort&compress]{natbib}
\graphicspath{{Figs/}}
\usepackage[caption=false]{subfig}

\usepackage[shortlabels]{enumitem}
\usepackage{nicefrac}       
\usepackage{graphicx}

\usepackage{amsthm,amsmath}
\usepackage{float}
\usepackage{cleveref}

\newtheorem{theorem}{Theorem}
\newtheorem{lemma}{Lemma}
\newtheorem{corollary}{Corollary}

\theoremstyle{definition}
\newtheorem{definition}{Definition}

\theoremstyle{remark}
\newtheorem{remark}{Remark}

\crefname{subsection}{Section}{Sections} 
\Crefname{subsection}{Section}{Sections}

\DeclareMathOperator{\rank}{rank}
\DeclareMathOperator{\Tr}{Tr}
\DeclareMathOperator{\Var}{Var}
\DeclareMathOperator{\arsinh}{arcsinh}
\DeclareMathOperator*{\argmin}{arg\,min}

\newcommand{\cO}{\mathcal{O}}
\newcommand{\Ee}{\mathbb E}
\newcommand{\Prr}{\mathbb P}
\newcommand{\norm}[1]{\left\lVert #1\right\rVert}
\newcommand{\abs}[1]{\left\lvert #1\right\rvert}
\newcommand{\ip}[2]{\left\langle #1,#2\right\rangle}
\newcommand{\F}{\mathrm F}
\newcommand{\op}{\mathrm{op}}
\newcommand{\D}{\mathcal D}
\newcommand{\A}{\mathcal A}

\newcommand{\brho}{\bm{\rho}}
\newcommand{\brhostar}{\bm{\rho}_\star}
\newcommand{\bP}{\bm{P}}
\newcommand{\bX}{\bm{X}}
\newcommand{\bZ}{\bm{Z}}
\newcommand{\bW}{\bm{W}}
\newcommand{\bH}{\bm{H}}
\newcommand{\bG}{\bm{G}}
\newcommand{\bU}{\bm{U}}
\newcommand{\bI}{\bm{I}}
\newcommand{\bDelta}{\bm{\Delta}}
\newcommand{\bLambda}{\bm{\Lambda}}

\newcommand{\bu}{\bm{u}}

\newcommand{\bmu}{\bm{\mu}}
\newcommand{\bz}{\bm{z}}
\newcommand{\bone}{\bm{1}}
\newcommand{\Ical}{\mathcal I}
\title{Quantized Low-Rank Quantum State Tomography: Hyperbolic Quantization and \\ Riemannian Least-Squares Recovery}
\author{
HanQin Cai\thanks{School of Data, Mathematical, and Statistical Sciences and Department of Computer Science,
University of Central Florida,
Orlando, FL 32816, USA (\texttt{hqcai@ucf.edu}).}
\and Longxiu Huang\thanks{Department of Computational Mathematics, Science and Engineering, Department of Mathematics, Michigan State University, East Lansing, MI 48824, USA (\texttt{huangl3@msu.edu}).}
\and Juntao You\thanks{(Corresponding author.) School of Artificial Intelligence, and Hubei Center for Applied Mathematics, Wuhan University, Wuhan, China (\texttt{youjuntao@whu.edu.cn}).}
}
\date{}

\begin{document}
\maketitle

\begin{abstract}
We study low-rank quantum state tomography from finite-bit Pauli batch responses. To avoid bias introduced by generic quantization, we propose HyperQuant, a mean-preserving hyperbolic quantizer adapted to the second-moment scale of Pauli responses. We establish minimax distortion guarantees and show that exact mean preservation enables direct rank-constrained least-squares recovery without altering the population target. We derive nonasymptotic recovery guarantees and an explicit bit--shot tradeoff under which finite-bit responses retain the error order of unquantized batch averages using fewer response bits. For efficient computation, we develop QuantRGD, a Riemannian gradient method with provable linear convergence to the corresponding statistical neighborhood under explicit resource conditions. Numerical experiments validate the predicted quantization, recovery, and convergence behavior.
\end{abstract}

\paragraph{Keywords.}
quantum state tomography; quantized low-rank recovery; finite-bit Pauli response; hyperbolic quantization; least squares;  Riemannian optimization.

\section{Introduction}\label{sec:intro}
Quantum state tomography (QST) aims to recover an unknown quantum state from measurements performed on independently prepared quantum systems. For an $n$-qubit system, the density matrix has dimension $d=2^n$, and a general quantum state contains $d^2-1$ real degrees of freedom. When the state has rank at most $r\ll d$, low-rank QST can substantially reduce the number of measurements required for accurate recovery. In particular, a rank-$r$ density matrix can be recovered from nearly $rd$ randomly sampled Pauli observables, up to logarithmic factors \citep{gross2010qst,liu2011universal,liu2012compressed}. Finite-shot recovery guarantees and projected least-squares estimators have also been developed for repeated Pauli measurements \citep{xia2017pauli,guta2020fast}.

Most existing QST models assume that the classical measurement responses are available to the recovery algorithm with sufficiently high precision. In practice, however, quantum measurements are followed by classical storage, communication, and processing, where the precision and throughput of measurement records are finite resources \citep{reilly2015interface,danjou2021readout,google2025qec,prathapan2022cryo,carreravazquez2024realtime}. In this paper, we study low-rank QST under a finite-bit representation of each Pauli batch response. For every sampled Pauli observable, we perform $\ell$ repeated measurements and form their empirical average, which is then encoded using $b$ bits. This separates three resources: the number of Pauli settings $M$, the total number of quantum copies $N=M\ell$, and the response-bit budget $B=Mb$. The central question is \textit{whether finite-bit responses can retain the statistical accuracy of the unquantized batch averages while still admitting efficient recovery.}

A key difficulty is that generic quantization can introduce bias into the response, thereby distorting the underlying Pauli measurement information. To avoid this issue, we impose exact mean preservation and propose \textit{HyperQuant}, a hyperbolic quantizer adapted to the second-moment scale of Pauli responses. Mean preservation ensures that the finite-bit response remains unbiased for the underlying Pauli coefficient, while the nonuniform hyperbolic alphabet controls the additional quantization variance. We establish minimax guarantees under a weighted conditional variance criterion and further characterize the minimax dependence on the alphabet size and dimension in the high-shot pure-state regime.

An important consequence of mean preservation is that the finite-bit responses can be used directly in a rank-constrained least-squares formulation without altering its population target. We establish nonasymptotic recovery guarantees that explicitly characterize the effects of finite-shot noise and quantization, and derive a bit--shot tradeoff under which the finite-bit estimator retains the error order of the unquantized batch averages. For efficient computation, we further develop \textit{QuantRGD}, a Riemannian gradient method operating directly on the finite-bit responses, and establish its linear convergence to a statistical neighborhood under explicit resource conditions.

The finite-bit formulation is related to distributed statistical estimation, where communication and sampling are treated as distinct resources \citep{zhang2013communication,han2018geometric,barnes2020fisher,suresh2017mean}, communication-limited quantum inference \citep{doosti2026distributed,mattig2026distributed}, and quantized low-rank recovery under one-bit or multilevel observation models \citep{cai2013maxnorm,davenport2014onebit,ni2016optimal,klopp2015adaptive,bhaskar2016probabilistic,shen2019robust,ding2020nonquadratic}. Related approaches include noise-shaping quantization \citep{lybrand2019quantization}, dithered matrix completion \citep{chen2023dithered}, and one-bit least-squares recovery \citep{huang2018robust}. Unlike these models, we quantize batch averages of repeated binary Pauli outcomes while preserving their conditional means. 

Factorized and Riemannian algorithms have been analyzed for QST after forming an unquantized or real-valued response vector \citep{kyrillidis2018provable,kim2023fast,hsu2024rgd}. Related Riemannian techniques have also been developed for structured low-rank recovery problems \citep{vandereycken2013low,wei2016guarantees,cai2019accelerated,hamm2022RieCUR,smith2026provableEDMC}. In contrast, QuantRGD operates directly on finite-bit Pauli responses while retaining the rank-constrained least-squares formulation enabled by mean preservation. Its convergence analysis therefore captures the joint effects of finite-shot noise and quantization.

\Cref{tab:intro} summarizes representative low-rank QST methods and their statistical and computational requirements.
\begin{table}[H]
\centering
\caption{Representative low-rank QST results.
Here $N=M\ell$,  $\kappa=\frac{\lambda_1(\rho_\star)}{\lambda_r(\rho_\star)}$, and $\varepsilon\lesssim \lambda_r(\rho_\star) $ denotes the target Frobenius accuracy. $\overline Y_i$ is defined in \eqref{def:Ym}. $Q^{\rm hyp}$ denotes the proposed hyperbolic quantizer.}
\label{tab:intro}
\scriptsize
\setlength{\tabcolsep}{2.5pt}
\renewcommand{\arraystretch}{1.3}
\resizebox{\textwidth}{!}{%
\begin{tabular}{l|cccccc}
\toprule
Method& response& \shortstack{Pauli-setting\\ requirement $M$}& \shortstack{copy\\complexity $N$}& \shortstack{iteration\\complexity}& \shortstack{per-iteration\\complexity}& \shortstack{memory\\complexity} \\
\midrule
ConvexCS (SVT) \citep{liu2012compressed,xia2017pauli,cai2010singular}& $\overline Y_i$& $\widetilde \cO(rd)$& $\widetilde \cO(rd^2/\varepsilon^2)$& sublinear& $\cO(Md+d^3)$& $\cO(d^2)$\\
\hline
ProjFGD \citep{kyrillidis2018provable}& $\Tr(\bP_i\brhostar)$& $\widetilde \cO(\kappa^2r^2d)$& --& $\cO(\kappa\log(1/\varepsilon))$& $\cO(Mdr+dr^2)$& $\cO(dr)$ \\
\hline
MiFGD \citep{kim2023fast}& $\Tr(\bP_i\brhostar)$& $\widetilde \cO(\kappa^2r^2d)$& --& $\cO(\kappa^{\frac{1}{2}}\log(1/\varepsilon))$& $\cO(Mdr+dr^2)$& $\cO(dr)$ \\
\hline
RGD \citep{hsu2024rgd}& $\overline Y_i$& $\widetilde \cO(\kappa^2r^2d)$& $\widetilde \cO(rd^2/\varepsilon^2)$& $\cO(\log(1/\varepsilon))$& $\cO(Mdr+dr^2+r^3)$& $\cO(dr)$ \\
\hline
HyperQuant LS (\Cref{sec:least-squares})& $Q^{\rm hyp}(\overline Y_i)$& $\widetilde \cO(rd)$& $\widetilde \cO(rd^2/\varepsilon^2)$& --& solver dependent& solver dependent \\
\hline
QuantRGD (\Cref{sec:rgd})& $Q^{\rm hyp}(\overline Y_i)$& $\widetilde \cO(\kappa^2r^2d)$& $\widetilde \cO(rd^2/\varepsilon^2)$& $\cO(\log(1/\varepsilon))$& $\cO(Mdr+dr^2+r^3)$& $\cO(dr)$ \\
\bottomrule
\end{tabular}%
}
\end{table}
\subsection{Main contributions}

In this paper, our main contributions are fourfold.

\begin{enumerate}[(i)]
\item We propose \textit{HyperQuant}, a mean-preserving hyperbolic quantizer adapted to the second-moment scale of Pauli responses. We establish minimax guarantees under a weighted conditional variance criterion and show that, in the high-shot pure-state regime, the minimax quantization distortion scales as $\Theta \left(\log^2 d/(dK^2)\right)$. See \Cref{sec:quantization}.

\item We establish nonasymptotic recovery guarantees for rank-constrained least squares directly from the finite-bit responses. Under the bit--shot matching condition, every global minimizer satisfies
$\norm{\widehat{\brho}_{\rm LS}-\brhostar}_{\F}
\lesssim d\sqrt r\max\{\sqrt{\frac{\log d}{N}},\frac{\log d}{M}\}$
retaining the error order of the unquantized batch averages. In particular, when $\ell\leq d$, only $\cO(\log\log(e+\ell))$ bits per Pauli batch are sufficient. See \Cref{sec:least-squares}.

\item We develop \textit{QuantRGD}, a Riemannian gradient method operating directly on the finite-bit responses. Each iteration costs $\cO(Mdr+dr^2+r^3)$ flops with $\cO(dr)$ working memory. We establish recovery guarantees for QuantRGD: under explicit resource conditions, its spectral initializer enters the contraction region, after which the iterates converge linearly to the statistical neighborhood determined jointly by finite-shot noise and quantization. See \Cref{sec:rgd}.

\item We provide extensive numerical experiments validating the theoretical findings. The results demonstrate the distortion advantage of HyperQuant over uniform quantization, the effectiveness of finite-bit low-rank recovery under constrained response budgets, and the convergence performance of QuantRGD relative to representative unquantized methods. See \Cref{sec:experiments}.
\end{enumerate}

\paragraph{Notation}
Bold symbols denote matrices and vectors, with their roles and dimensions clear from context, while nonbold letters denote scalars. For a matrix $\bX$, $\bX^\dagger$, $\rank(\bX)$, and $\Tr(\bX)$ denote its conjugate transpose, rank, and trace, respectively. We use $\norm{\bX}_{\op}$, $\norm{\bX}_{\F}$, and $\norm{\bX}_*$ for its operator, Frobenius, and nuclear norms. The $i$-th singular value of $\bX$ is denoted by $\sigma_i(\bX)$ and  its eigenvalues are arranged as $\lambda_1(\bX)\ge\cdots\ge\lambda_d(\bX)$ whenever algebraic order is used; an alternative ordering, such as decreasing magnitude, is stated explicitly. We write $\bX\succeq\bm 0$ and $\bX\preceq\bZ$ for the Loewner order. The Frobenius inner product is $\ip{\bX}{\bZ}=\operatorname{Re}\Tr(\bX^\dagger\bZ)$, which reduces to $\Tr(\bX\bZ)$ for Hermitian matrices. For a Hermitian matrix $\bZ$, we define $\norm{\bZ}_{(s)}=(\sum_{j=1}^{\min\{s,d\}}\sigma_j(\bZ)^2)^\frac{1}{2}$. For a linear operator $\A$, $\A^*$ denotes its adjoint with respect to the Euclidean and Frobenius inner products. We write $\Ee$, $\Prr$, and $\Var$ for expectation, probability, and variance. The notation $X\sim\mu$ means that $X$ has distribution $\mu$, $\mathrm{Unif}(\mathcal S)$ denotes the uniform distribution on a finite set $\mathcal S$, and $\delta_x$ denotes the Dirac point mass at $x$. Unless otherwise stated, $\log$ is the natural logarithm and $\log_2$ is base two. For two nonnegative quantities $f$ and $g$, the relation $f=\cO(g)$ means that $f/g$ is bounded by a positive constant independent of the problem parameters, while $f=\Theta(g)$ means that both $f=\cO(g)$ and $g=\cO(f)$ hold.  We write $f\lesssim g$, $f\gtrsim g$, and $f\asymp g$ as shorthand for $f=\cO(g)$, $g=\cO(f)$, and $f=\Theta(g)$, respectively. The notation $\widetilde \cO(\cdot)$ suppresses logarithmic factors.

\section{Observation Model and Hyperbolic Quantization}\label{sec:setup}\label{sec:quantization}

This section formulates the finite-bit Pauli observation model and develops a mean-preserving \textbf{hyper}bolic \textbf{quant}izer, termed \textit{HyperQuant}. Its alphabet is adapted to the second-moment scale of Pauli responses to control distortion under a finite-bit budget. We first describe the observation model and then present the construction and guarantees of HyperQuant.

\subsection{Problem setup}
Let
\[\D_r=\left\{\brho\in\mathbb C^{d\times d}:\brho=\brho^\dagger,\ \brho\succeq0,\ \Tr(\brho)=1,\ \rank(\brho)\le r\right\}\]
denote the set of density matrices with rank at most $r$. Our goal is to recover an unknown quantum state $\brhostar\in\D_r$ from finite-bit Pauli responses.

Specifically, for each $i=1,\ldots,M$, we draw $\bP_i$ independently and uniformly at random from the $d^2$ Pauli observables. At the selected setting $\bP_i$, we measure $\ell$ fresh copies of $\brhostar$ and observe
\[Y_{ij}\in\{-1,1\},\qquad\Prr(Y_{ij}=1\mid \bP_i)=\frac{1+\Tr(\bP_i\brhostar)}2,\qquad j=1,\ldots,\ell.\]
We then retain the $b$-bit quantized response
\begin{equation}\label{def:Ym}
\widetilde Y_i = Q(\overline Y_i),\qquad \text{where}\quad \overline Y_i =\frac1\ell\sum_{j=1}^{\ell}Y_{ij},\quad  i=1,\ldots,M,
\end{equation}
and the quantizer $Q$ will be specified below. Thus, $M$ denotes the number of random Pauli settings, $\ell$ is the number of copies measured at each setting, and $b$ is the number of bits used to encode each quantized response. The total numbers of quantum copies and response bits are $N=M\ell$ and $B=Mb$, respectively. Here $B$ counts only the fixed-length response messages. The Pauli settings are assumed to be known to the decoder, for example through a shared random seed. The statistical problem is therefore to recover $\brhostar$ from the quantized responses $\{(\bP_i,\widetilde Y_i)\}_{i=1}^M$. Unlike standard QST, the decoder no longer has access to the batch averages themselves. Quantization introduces an additional source of distortion, while the numbers of Pauli settings, quantum copies, and transmitted bits constrain the recovery problem. A desired procedure should retain the information in the measurements while maintaining sample efficiency and computational tractability. This leads to our central question:

\noindent
\emph{Under resource budgets $(M,N,B)$, can quantized Pauli responses retain the statistical accuracy of their unquantized counterparts while admitting provably efficient recovery?}

We answer this question by designing a mean-preserving quantizer, characterizing its minimax distortion and bit--shot tradeoff, and establishing statistical and computational recovery guarantees.

\subsection{HyperQuant and minimax guarantees}
The batch average $\overline Y_i$ is an unbiased estimator of the Pauli coefficient $\Tr(\bP_i\brhostar)$. A generic finite-alphabet representation, however, can introduce bias and thereby distort the underlying Pauli coefficient. We therefore impose exact mean preservation and then choose the alphabet to control the additional quantization variance.

\begin{definition}\label{def:mean-preserving}
Let $K=2^b$. A $K$-level scalar quantizer has an ordered alphabet $-1=q_0<q_1<\cdots<q_{K-1}=1$. It is \emph{mean-preserving} if
\[\Ee[Q(x)\mid x]=x, \qquad \forall x\in[-1,1].\]
For a fixed alphabet and $x\in[q_j,q_{j+1}]$, the \emph{adjacent stochastic rounding} rule is 
\[Q(x)=
\begin{cases}
q_j,&\text{with probability }\displaystyle\frac{q_{j+1}-x}{q_{j+1}-q_j},\\[1.1ex]
q_{j+1},&\text{with probability }\displaystyle\frac{x-q_j}{q_{j+1}-q_j}.
\end{cases}
\]
This is the unique mean-preserving rule supported on the two levels that bracket $x$.   For uniformly spaced levels \(q_i=-1+\frac{2i}{K-1},\) we call $Q$ the uniform quantizer.
\end{definition}

For $x\in[q_j,q_{j+1}]$, adjacent stochastic rounding satisfies
\begin{equation}\label{eq:adjacent-variance}\Var(Q(x)\mid x)=(x-q_j)(q_{j+1}-x). \end{equation}
Thus, once mean preservation determines the rounding probabilities, the remaining design freedom lies in the placement of the alphabet levels. The relevant scale is determined by the second moment of the quantity being quantized. For the Pauli batch response, Pauli orthogonality together with the conditional second moment of the binary outcomes gives
\begin{equation}\label{eq:pauli-batch-second-moment}
\Ee\overline Y_i^2=\frac{1}{\ell}+\left(1-\frac{1}{\ell}\right)\frac{\Tr(\brhostar^2)}{d}\le \frac1\ell+\frac{1-1/\ell}{d}.
\end{equation}
Accordingly, we define
\begin{equation}\label{eq:pauli-second-moment-scale}
\nu_{\ell,d}=\frac1\ell+\frac{1-1/\ell}{d},\qquad A_{\ell,d}=\arsinh\!\left(\nu_{\ell,d}^{-1/2}\right).
\end{equation}
Eq.~\eqref{eq:adjacent-variance} suggests balancing the cell widths against the local scale $\sqrt{x^2+\nu_{\ell,d}}$, which allocates finer resolution near the origin and coarser resolution toward the endpoints. Equivalently, it amounts to uniform spacing in the transformed coordinate $u = \arsinh\!\left(\frac{x}{\sqrt{\nu_{\ell,d}}}\right)$. We therefore choose the Pauli-adapted hyperbolic alphabet
\begin{equation}\label{eq:hyperbolic-alphabet}
q_j=\sqrt{\nu_{\ell,d}}\,\sinh\!\left[\left(\frac{2j}{K-1}-1\right)A_{\ell,d}\right],\qquad j=0,\ldots,K-1.
\end{equation}
Note that the endpoints satisfy $q_0=-1$ and $q_{K-1}=1$. Applying \Cref{def:mean-preserving} to \eqref{eq:hyperbolic-alphabet} defines mean-preserving \textit{HyperQuant}, denoted by $Q_{K,\ell,d}$. The transmitted response is
\[
\widetilde Y_i=Q_{K,\ell,d}(\overline Y_i).
\]
The auxiliary random variables used by the quantizer are independent across batches and independent of the Pauli settings and quantum outcomes. By the tower property and the mean-preserving construction,
\[
\Ee[\widetilde Y_i\mid\bP_i]=\Ee[\overline Y_i\mid\bP_i]=\Tr(\bP_i\brhostar).
\]
Thus, the HyperQuant response remains conditionally unbiased for the underlying Pauli coefficient.  Before establishing the optimality properties, we examine the limiting case of a one-bit response.
\begin{remark}
When $K=2$, the alphabet is $\{\pm1\}$, and mean preservation uniquely determines $\mathbb P\bigl(Q(x)=1\mid x\bigr)=\frac{1+x}{2}$. Consequently,
$
\mathbb P\bigl(\widetilde Y_i=1\mid\bP_i\bigr)
=\frac{1+\Tr(\bP_i\brhostar)}{2}.
$
Thus, a one-bit HyperQuant response is conditionally equivalent to a single Pauli outcome and cannot retain the variance reduction from batch averaging, regardless of $\ell$.
\end{remark}
We now return to general $K$ and determine how the output levels
should be placed to control the additional quantization variance.
The following theorem shows that HyperQuant is
optimal, up to the stated factor, under a weighted minimax criterion.

\begin{theorem}\label{thm:weighted}
Let $K\ge2$, and define
\[
\mathfrak V_K(\nu_{\ell,d})=\inf_Q\sup_{x\in[-1,1]}\frac{\Var(Q(x)\mid x)}{x^2+\nu_{\ell,d}},
\]
where the infimum is over all mean-preserving $K$-level scalar quantizers with ordered alphabet $-1=q_0<\cdots<q_{K-1}=1$. Then
\begin{equation}\label{eq:weighted-minimax}\frac{A_{\ell,d}^2}{(K-1)^2}\le\mathfrak V_K(\nu_{\ell,d})\le\sup_{x\in[-1,1]}\frac{\Var(Q_{K,\ell,d}(x)\mid x)}{x^2+\nu_{\ell,d}}\le\exp\!\left(\frac{4A_{\ell,d}}{K-1}\right)\frac{A_{\ell,d}^2}{(K-1)^2}.
\end{equation}
\end{theorem}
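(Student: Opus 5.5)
The plan is to handle the three inequalities in \eqref{eq:weighted-minimax} separately. Throughout I work in the transformed coordinate $u=\arsinh(x/\sqrt{\nu})$ with $\nu=\nu_{\ell,d}$. In this coordinate $x^2+\nu=\nu\cosh^2u$, and $x\in[-1,1]$ corresponds to $u\in[-A_{\ell,d},A_{\ell,d}]$. The middle inequality is immediate, because $Q_{K,\ell,d}$ is one admissible mean-preserving quantizer.

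\emph{Lower bound.} The first step is to reduce to adjacent stochastic rounding. Fix an alphabet and any mean-preserving $Q(x)$ supported on it. Since $t\mapsto t^2$ is convex, its values at the alphabet points lie above the piecewise-linear interpolant $\phi$ through the points $(q_k,q_k^2)$. Hence $\Ee[Q(x)^2\mid x]\ge \Ee[\phi(Q(x))\mid x]\ge\phi(x)$, where the second step uses that $\phi$ is convex and $\Ee[Q(x)\mid x]=x$. For $x\in[q_j,q_{j+1}]$ this gives
\[
\Var(Q(x)\mid x)\ge \phi(x)-x^2=(x-q_j)(q_{j+1}-x),
\]
which is \eqref{eq:adjacent-variance}. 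So it suffices to bound the adjacent-rounding ratio from below for every alphabet.

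The cells have $u$-widths summing to $2A_{\ell,d}$, so some cell $[a,b]$ has $u$-width $2h\ge 2A_{\ell,d}/(K-1)$. Evaluate the ratio at the $u$-midpoint $m$ of that cell, i.e.\ at $x=\sqrt\nu\sinh m$. Using $\sinh(m+h)-\sinh m=2\cosh(m+h/2)\sinh(h/2)$ and the symmetric identity for the other side, the product of the two gaps equals
\[
4\nu\sinh^2(h/2)\cosh(m+h/2)\cosh(m-h/2)=4\nu\sinh^2(h/2)\bigl(\cosh^2m+\sinh^2(h/2)\bigr).
\]
This is at least $4\nu\sinh^2(h/2)\cosh^2m\ge \nu h^2\cosh^2 m$. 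Dividing by $\nu\cosh^2m$ gives a ratio of at least $h^2\ge A_{\ell,d}^2/(K-1)^2$. Taking the infimum over $Q$ yields the lower bound.

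\emph{Upper bound for HyperQuant.} By construction every cell of $Q_{K,\ell,d}$ has $u$-width exactly $2h$ with $h=A_{\ell,d}/(K-1)$. Take $x=\sqrt\nu\sinh u$ in the cell $[m-h,m+h]$, and set $s=u-(m-h)$ and $t=(m+h)-u$, so that $s+t=2h$ and $st\le h^2$. By the mean value theorem, each gap is at most $\sqrt\nu$ times its $u$-length times $\max\cosh$ over the cell. That maximum is at most $\cosh(|u|+2h)\le e^{2h}\cosh u$. Therefore
\[
\frac{\Var(Q_{K,\ell,d}(x)\mid x)}{x^2+\nu}\le \frac{\nu\, st\, e^{4h}\cosh^2u}{\nu\cosh^2u}\le e^{4h}h^2,
\]
which is the stated right-hand side.

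I expect the main obstacle to be the lower bound. It needs two things: the convexity argument showing that no mean-preserving randomization can beat adjacent rounding, and an exact hyperbolic identity showing that the midpoint-in-$u$ of the widest cell already forces the ratio $h^2$ with no loss of constant. The upper bound is routine mean-value bookkeeping.
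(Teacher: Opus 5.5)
Your proof is correct and follows essentially the same route as the paper. Both arguments use the same three ingredients:
\begin{enumerate}[(i)]
\item the pointwise bound $\Var(Q(x)\mid x)\ge(x-q_j)(q_{j+1}-x)$, which you obtain via Jensen on the convex interpolant $\phi$ and the paper obtains via $(Z-q_j)(Z-q_{j+1})\ge0$; this is the same supporting-secant fact;
\item the widest-cell-in-$u$ midpoint argument with the identity $\cosh(m-h/2)\cosh(m+h/2)=\cosh^2m+\sinh^2(h/2)$;
\item a $\cosh$-growth bound on the equal-$u$-width cells for the upper bound.
\end{enumerate}
The only difference is cosmetic: you bound the two gaps separately using $st\le h^2$, whereas the paper uses $(x-q_j)(q_{j+1}-x)\le\frac14(q_{j+1}-q_j)^2$. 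Both yield the identical constant $e^{4h}h^2$.
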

The proof is deferred to \Cref{sec:proof-weighted}. In particular, \Cref{thm:weighted} shows that HyperQuant is minimax optimal up to an absolute constant whenever $K-1$ is a sufficiently large multiple of $A_{\ell,d}$.

The following elementary observation converts the weighted pointwise criterion in \Cref{thm:weighted} into a worst-case average distortion under a second-moment constraint. For an admissible quantizer $Q$, define
\[
R_Q(x):=\operatorname{Var}(Q(x)\mid x),\qquad T_Q:=\sup_{x\in[-1,1]}\frac{R_Q(x)}{x^2+\nu}.
\]
Then
\[\nu T_Q\le\sup_{\substack{\mu:\,\operatorname{supp}(\mu)\subset[-1,1]\\       \mathbb E_\mu X^2\le\nu}}\mathbb E_\mu R_Q(X)\le 2\nu T_Q.
\]
Indeed, the upper bound follows immediately from $R_Q(x)\le (x^2+\nu)T_Q$. For the lower bound, choose a point $x$ arbitrarily close to attaining $T_Q$. If $x^2\le\nu$, use $\mu=\delta_x$; otherwise use $\mu=\frac{\nu}{x^2}\delta_x+\left(1-\frac{\nu}{x^2}\right)\delta_0$.

\begin{corollary}\label{cor:moment-minimax}
Let the supremum below range over all probability distributions $\mu$ supported on $[-1,1]$ satisfying $\Ee_{X\sim\mu}X^2\le\nu_{\ell,d}$,  and let the infimum range over the same class of quantizers as in \Cref{thm:weighted}. Then
\[
\nu_{\ell,d}\mathfrak V_K(\nu_{\ell,d})\le\inf_Q\sup_{\mu}\Ee_{X\sim\mu}\Var(Q(X)\mid X)\le 2\nu_{\ell,d}\mathfrak V_K(\nu_{\ell,d}).
\]
\end{corollary}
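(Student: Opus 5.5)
The plan is to apply the elementary sandwich stated just before the corollary to each admissible quantizer $Q$ separately, and then take the infimum over $Q$ on both sides. Throughout, write $\nu=\nu_{\ell,d}$ and $S(Q)=\sup_{\mu}\Ee_{X\sim\mu}R_Q(X)$, where the supremum runs over distributions on $[-1,1]$ with $\Ee_\mu X^2\le\nu$. The displayed observation gives, for every admissible $Q$,
\[
\nu T_Q\le S(Q)\le 2\nu T_Q .
\]
By definition, $\mathfrak V_K(\nu)=\inf_Q T_Q$, with the infimum over the same class. Taking $\inf_Q$ across this chain is monotone, so it yields $\nu\,\mathfrak V_K(\nu)\le\inf_Q S(Q)\le 2\nu\,\mathfrak V_K(\nu)$, which is exactly the claim.

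I would still write out both halves of the per-$Q$ sandwich to make the argument self-contained.

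For the upper bound, $R_Q(x)\le (x^2+\nu)T_Q$ holds pointwise. Integrating against any admissible $\mu$ gives $\Ee_\mu R_Q\le T_Q(\Ee_\mu X^2+\nu)\le 2\nu T_Q$.

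For the lower bound, fix $\varepsilon>0$ and pick $x$ with $R_Q(x)\ge (T_Q-\varepsilon)(x^2+\nu)$. This is possible because $T_Q$ is finite: $R_Q\le 1$ on $[-1,1]$ and $x^2+\nu\ge\nu>0$. There are two cases.
\begin{itemize}
\item If $x^2\le\nu$, take $\mu=\delta_x$. It is admissible and gives $\Ee_\mu R_Q=R_Q(x)\ge (T_Q-\varepsilon)\nu$.
\item If $x^2>\nu$, take $\mu=\frac{\nu}{x^2}\delta_x+(1-\frac{\nu}{x^2})\delta_0$. This has second moment exactly $\nu$. Since $R_Q(0)\ge 0$, we get $\Ee_\mu R_Q\ge \frac{\nu}{x^2}R_Q(x)\ge (T_Q-\varepsilon)\frac{\nu(x^2+\nu)}{x^2}\ge (T_Q-\varepsilon)\nu$.
\end{itemize}
Letting $\varepsilon\to0$ gives $S(Q)\ge \nu T_Q$.

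The only points needing care are modest. First, the supremum defining $T_Q$ need not be attained, since $R_Q$ is piecewise quadratic and continuous under adjacent rounding but may be arbitrary for general mean-preserving rules; this is handled by the $\varepsilon$-approximation above. Second, the infima over $Q$ on the two sides must range over the identical class, which the corollary stipulates. Third, $R_Q\ge0$ is needed so that the mass placed at $0$ does not hurt the lower bound. None of these is a genuine obstacle, so the corollary follows directly from the observation together with the definition of $\mathfrak V_K$. \Cref{thm:weighted} is not needed for the corollary itself; it only serves afterwards to make both sides explicit as $\Theta\bigl(\nu A_{\ell,d}^2/(K-1)^2\bigr)$ when $K-1\gtrsim A_{\ell,d}$.
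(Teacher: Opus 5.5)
Your proposal is correct and follows the paper's proof essentially step for step: the upper bound comes from integrating $R_Q(x)\le (x^2+\nu)T_Q$; the lower bound picks an $\varepsilon$-approximate maximizer and uses either $\delta_x$ or the mixture $\frac{\nu}{x^2}\delta_x+(1-\frac{\nu}{x^2})\delta_0$, then lets $\varepsilon\to0$; and the infimum is taken over the common class of quantizers. One point is left implicit, in both your argument and the paper's: the step $(T_Q-\varepsilon)(x^2+\nu)\ge (T_Q-\varepsilon)\nu$ uses $\varepsilon\le T_Q$, which is harmless because otherwise the bound follows trivially from $R_Q\ge0$.
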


For the specific Pauli response, take $X=\overline{Y}_i$ and $Q=Q_{K,\ell,d}$. Since HyperQuant is mean-preserving, $\mathbb E\!\left[(\widetilde Y_i-\overline{Y}_i)^2 \mid \overline{Y}_i\right]=\operatorname{Var}\!\left(Q_{K,\ell,d}(\overline{Y}_i)\mid \overline{Y}_i\right)$. Moreover, \eqref{eq:pauli-batch-second-moment} gives $\mathbb E \overline{Y}_i^2\le\nu_{\ell,d}$.  Applying the preceding comparison for the fixed quantizer $Q_{K,\ell,d}$, together with the pointwise upper bound in \Cref{thm:weighted}, yields 
\begin{equation}\label{eq:quant-var}
\mathbb E(\widetilde Y_i-\overline{Y}_i)^2\le 2\exp\!\left(\frac{4A_{\ell,d}}{K-1}\right)\frac{\nu_{\ell,d}A_{\ell,d}^2}{(K-1)^2}.
\end{equation}
The estimate in \eqref{eq:quant-var} is distribution-free: it uses the Pauli model only through the second-moment envelope \eqref{eq:pauli-batch-second-moment}.  It is natural to ask whether its dependence on $d$ can be improved by exploiting the finer distributional structure of Pauli coefficients. To answer this question, we consider the high-shot limit, in which the batch average approaches the exact $\Tr(\bP\brho)$. The next result shows that the dimension dependence in \eqref{eq:quant-var} is sharp, up to absolute constants, in the worst case over pure states.

\begin{theorem}\label{thm:pauli-distortion}
Suppose that $d$ is sufficiently large and $K\ge2\arsinh(\sqrt d)+1$. Then there exists an absolute constant $C_1>0$ such that
\begin{equation}\label{eq:pauli-distortion}
\frac{C_1\log^2d}{d(K+1)^2}\le\inf_Q\sup_{\brho\in\D_1}\Ee_{\bP}\Var\!\left(Q(\Tr(\bP\brho))\mid\Tr(\bP\brho)\right)\le\frac{2e^2\log^2d}{d(K-1)^2},
\end{equation}
where the infimum is over all $K$-level scalar quantizers with ordered alphabet $-1=q_0<\cdots<q_{K-1}=1$ that satisfy the conditional-mean constraint.
\end{theorem}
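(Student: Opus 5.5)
The plan is to prove the two inequalities separately. The upper bound follows from \Cref{thm:weighted} and \Cref{cor:moment-minimax} in the infinite-shot limit. The lower bound needs a hard pure state whose Pauli coefficients spread over many dyadic scales between $1/\sqrt d$ and $1$.

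\emph{Upper bound.} Take the HyperQuant alphabet with $\ell\to\infty$, i.e.\ $\nu=1/d$ and $A=\arsinh(\sqrt d)$. For any $\brho\in\D_1$ and uniformly random $\bP$, Pauli orthogonality gives
\[
\Ee_{\bP}\Tr(\bP\brho)^2=\frac{\Tr(\brho^2)}{d}=\frac1d=\nu .
\]
The upper comparison before \Cref{cor:moment-minimax} then bounds the risk by $2\nu T_Q$. The pointwise bound of \Cref{thm:weighted} gives
\[
T_Q\le e^{4A/(K-1)}\frac{A^2}{(K-1)^2}.
\]
The hypothesis $K-1\ge 2A$ gives $e^{4A/(K-1)}\le e^2$. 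Since $\arsinh(\sqrt d)=\log(\sqrt d+\sqrt{d+1})$, we have $A^2\le \log^2 d$ for $d$ large enough; I would check this and, if needed, absorb the small excess using the slack in how large $d$ must be. This yields $\frac{2e^2\log^2d}{d(K-1)^2}$.

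\emph{Lower bound.} Here the strategy is to construct, for every admissible alphabet, a single pure state (or a small family) whose Pauli coefficients have a rich magnitude profile. A natural first candidate is a stabilizer state. It has exactly $d$ Pauli coefficients equal to $\pm1$ and the rest $0$, but values in $\{0,\pm1\}$ lie on the alphabet endpoints or can be handled by placing a level at $0$, so it is useless. Instead I would use product-type states, or a state $\brho=\psi\psi^\dagger$ whose Pauli spectrum $\{\Tr(\bP\brho)^2\}$ puts comparable mass $\asymp 1/(d\log d)$ in each dyadic band $[2^{-k-1},2^{-k}]$ of $|x|$, for $k$ ranging over about $\log_2\sqrt d$ scales down to $1/\sqrt d$.

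For example, tensor products of single-qubit states tilted at angle $\theta$ have Pauli coefficients $\prod(\cos\theta)^{a}(\sin\theta)^{b}$ with binomial weights. A tensor product of such blocks, or a mixture over $\theta$ by choosing sub-blocks, is the concrete construction I would try first. Given such a "multiscale" coefficient distribution $\mu$ with $\mu$-mass $p_k$ near $\pm 2^{-k}$ spread within each band, the lower-bound argument mirrors the proof of the lower half of \Cref{thm:weighted}. With $K$ levels and about $\log d$ bands, a pigeonhole or averaging argument shows that a constant fraction of bands contain at most $O(K/\log d)$ levels. Within such a band of width $\asymp 2^{-k}$, the adjacent-rounding variance \eqref{eq:adjacent-variance} averaged over a spread-out distribution is $\gtrsim (2^{-k}\log d/K)^2$. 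Summing over bands with band probability weights $\asymp 2^{2k}/(d\log d)$ gives
\[
\sum_k \frac{2^{2k}}{d\log d}\cdot\frac{2^{-2k}\log^2 d}{K^2}\asymp\frac{\log^2 d}{dK^2}.
\]
One also needs that any mean-preserving rule has variance at least that of adjacent rounding, since adjacent rounding minimizes conditional variance among mean-preserving rules supported on the alphabet. The $(K+1)^2$ in the statement suggests the authors count the intervals with an extra endpoint slack.

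\emph{Main obstacle.} The main obstacle is the construction of the pure state: within each band, the coefficients must be sufficiently spread (not atomic) so that the quantizer cannot place levels exactly on them. The total mass must also be balanced as $\asymp 2^{2k}/(d\log d)$ over $\Theta(\log d)$ bands, while respecting $\sum_{\bP}\Tr(\bP\brho)^2=d$. The fallback is to use a finite family of pure states, for instance products with different angles, and to lower-bound the supremum by the average over the family. This makes the effective coefficient distribution a smooth mixture, at the cost of the absolute constant $C_1$.
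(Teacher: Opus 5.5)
Your upper bound is correct and is the paper's argument. The steps are: HyperQuant at $\nu=1/d$, $\Ee_{\bP}\Tr(\bP\brho)^2=1/d$ for pure $\brho$, the $2\nu T_Q$ comparison, $e^{4A/(K-1)}\le e^2$, and $\arsinh(\sqrt d)\le\log d$, which holds for $d\ge 8$. The reduction step of your lower bound is also sound. Suppose some prior on $\D_1$ makes $X=\Tr(\bP\brho)$ have a \emph{pointwise} density $f(x)\gtrsim 1/(d\log d\,x^{3})$ on an interval such as $[2/\sqrt d,1/2]$. Then $\Var(Q(x)\mid x)\ge(x-q_j)(q_{j+1}-x)$ and cell counting give order $\log^2 d/(dK^2)$. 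The paper carries this out with $\int_a^b(x-a)(b-x)x^{-3}\,dx=\sinh L-L\ge L^3/6$, $L=\log(b/a)$, and Jensen over the at most $K+1$ cells meeting the interval; that is where $(K+1)^2$ comes from.

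The gap is that you never produce such a distribution, and that construction is the whole content of the lower bound. The candidates you name do not supply it.

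\emph{Single product states.} For any product state, including tensor products of tilted blocks, the weights $\Tr(\bP\brho)^2/d$ form a product probability measure over the qubits. Under it, $\log_2|\Tr(\bP\brho)|$ is a sum of $n$ independent terms of bounded variance. By Chebyshev, all but an $\epsilon$-fraction of the second moment sits in $O_\epsilon(\sqrt{\log d})$ dyadic bands. So no single product state gives $\Theta(\log d)$ bands a share $\gtrsim 1/\log d$ each, as your pigeonhole step requires.

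\emph{Finite families.} Your fallback fails for a different reason: a fixed finite family of states induces a law of $X$ with finitely many atoms. The theorem must hold for every $K\ge2\arsinh(\sqrt d)+1$ with no upper limit. An alphabet containing all those atoms therefore has zero distortion on the family, so the family average is not a smooth mixture and cannot give the bound.

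\emph{The missing idea.} What you need is a continuous multiscale prior with an explicit density. The paper takes $k\sim\mathrm{Unif}\{2,\dots,n-1\}$, $m=2^k$, a Haar-random $\psi\in\mathbb C^m$, and $\brho=|\psi\rangle\langle\psi|\otimes|0\rangle\langle0|^{\otimes(n-k)}$.
\begin{itemize}
\item For nonidentity $P_A$, $\langle\psi,P_A\psi\rangle$ has the law of $2S-1$ with $S\sim\mathrm{Beta}(m/2,m/2)$. Its density is $\gtrsim\sqrt m$ when $mx^2<2$ and $|x|\le1/2$.
\item The ancilla factor is nonzero only for the fraction $m/d$ of Pauli strings whose last $n-k$ factors lie in $\{I,\sigma_z\}$.
\item Choosing the dyadic $m$ with $1\le mx^2<2$ then gives $f(x)\gtrsim m^{3/2}/(d\log d)\gtrsim 1/(d\log d\,x^3)$ on all of $[2/\sqrt d,1/2]$, uniformly over alphabets however fine.
\end{itemize}
A continuous mixture over product-state angles might also cover a constant fraction of the scales. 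It would need exactly this kind of pointwise density bound, which your proposal does not establish.
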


The proof is given in \Cref{sec:proof-pauli-distortion}. Therefore, in the high-shot pure-state regime, the minimax scalar distortion is of order $\frac{\log^2 d}{dK^2}$. Together with \Cref{thm:weighted}, this shows that HyperQuant captures the correct dependence on both the alphabet size $K$ and the Pauli dimension $d$. We next use these quantization guarantees to analyze state recovery from the finite-bit responses.

\section{Rank-Constrained Least-Squares Recovery}\label{sec:least-squares}
In this section, we study the rank-constrained least-squares estimator based on the quantized responses. We first show that the HyperQuant observations $\widetilde Y_i$ can be used directly in a quadratic loss without changing the population target. Consider the empirical least-squares loss
\begin{equation}\label{eq:loss}
\mathcal L_M(\bX)=\frac d{2M}\sum_{i=1}^M\bigl(\Tr(\bP_i\bX)-\widetilde Y_i\bigr)^2.
\end{equation}
For any $\bX\in\D_r$, we have
\begin{align*} 
&~\bigl(\Tr(\bP_i\bX)-\widetilde Y_i\bigr)^2-\bigl(\Tr(\bP_i\brhostar)-\widetilde Y_i\bigr)^2\cr
=&~\Tr(\bP_i(\bX-\brhostar))^2+2\Tr(\bP_i(\bX-\brhostar))\bigl(\Tr(\bP_i\brhostar)-\widetilde Y_i\bigr).
\end{align*}
Conditioning on $\bP_i$, the mean-preserving property of HyperQuant implies that the second term has zero expectation. Moreover, Pauli orthogonality gives $\Ee_{\bP}\Tr(\bP(\bX-\brhostar))^2=\frac1d\norm{\bX-\brhostar}_{\F}^2$. Averaging over the $M$ batches, we obtain
\begin{equation*}
\Ee\bigl[\mathcal L_M(\bX)-\mathcal L_M(\brhostar)\bigr]=\frac12\norm{\bX-\brhostar}_{\F}^2.
\end{equation*}
Thus, although quantization changes the fluctuations of the observations, it does not change the population least-squares objective: the expected excess loss is exactly one half the squared Frobenius distance from $\brhostar$. This motivates the global rank-constrained least-squares estimator
\begin{equation}\label{eq:ls-estimator}
\widehat{\brho}_{\rm LS}\in\argmin_{\bX\in\D_r}\mathcal L_M(\bX).
\end{equation}
We next quantify the finite-sample accuracy of this estimator. The following theorem shows that every global minimizer of \eqref{eq:ls-estimator} is statistically accurate once the number of Pauli settings is sufficient for low-rank recovery.  The proof is deferred to \Cref{sec:proof-ls}. 

\begin{theorem}\label{thm:ls}
 Suppose that $d\ge2$, $\brhostar\in\D_r$, and responses are generated by HyperQuant. Then, there exists some absolute constant $C_2>0$ such that, with probability at least $1-d^{-10}$, every global minimizer $\widehat{\brho}_{\rm LS}$ of \eqref{eq:ls-estimator} satisfies
\begin{equation}\label{eq:ls-bound}
\norm{\widehat{\brho}_{\rm LS}-\brhostar}_{\F}\le C_2d\sqrt r\max\left\{\sqrt{\frac{V_{\ell,K,d}\log d}{M}},\frac{\log d}{M}\right\},
\end{equation}
provided $M\ge C_2 rd\log^7d$, where
\begin{equation}\label{eq:effective-variance}
V_{\ell,K,d}=\frac1\ell+2\exp \left[\frac{4A_{\ell,d}}{K-1}\right]\frac{\nu_{\ell,d}A_{\ell,d}^2}{(K-1)^2}.
\end{equation}
\end{theorem}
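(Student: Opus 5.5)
The argument follows the usual basic-inequality route for rank-constrained least squares, with the noise controlled in operator norm and the design controlled through a restricted isometry property. Write $\bDelta=\widehat{\brho}_{\rm LS}-\brhostar$ and define the sampling operator $\A(\bX)=\sqrt{d/M}\,(\Tr(\bP_i\bX))_{i=1}^M$. Let $\varepsilon_i=\widetilde Y_i-\Tr(\bP_i\brhostar)$. Since $\mathcal L_M(\widehat{\brho}_{\rm LS})\le\mathcal L_M(\brhostar)$, expanding the squares as in the display preceding \eqref{eq:ls-estimator} gives
\[
\tfrac12\norm{\A(\bDelta)}_2^2\le \ip{\bDelta}{\bW},\qquad \bW=\frac dM\sum_{i=1}^M\varepsilon_i\bP_i .
\]
The matrix $\bDelta$ is Hermitian with rank at most $2r$. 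Hence $\ip{\bDelta}{\bW}\le\sqrt{2r}\,\norm{\bDelta}_{\F}\norm{\bW}_{\op}$.

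The lower side of the basic inequality comes from the rank-$2r$ RIP for random Pauli sampling of Liu \citep{liu2011universal}, in its improved log-power form. When $M\ge C rd\log^7 d$, with probability at least $1-d^{-10}/2$ it gives $\norm{\A(\bZ)}_2^2\ge\tfrac12\norm{\bZ}_{\F}^2$ for every Hermitian $\bZ$ of rank at most $2r$. I take this as a cited black box; it is the source of the $\log^7 d$ in the condition on $M$. Combining it with the basic inequality yields $\norm{\bDelta}_{\F}\le 4\sqrt{2r}\,\norm{\bW}_{\op}$. This holds simultaneously for every global minimizer, because the RIP event does not depend on the minimizer.

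The main step is the bound on $\norm{\bW}_{\op}$, which I would obtain from the matrix Bernstein inequality. The summands $\bm{S}_i=\frac dM\varepsilon_i\bP_i$ are i.i.d. Hermitian. They have mean zero, since $\Ee[\varepsilon_i\mid\bP_i]=0$ by mean preservation and the tower property. They are bounded, since $\abs{\varepsilon_i}\le2$ and $\norm{\bP_i}_{\op}=1$ give $\norm{\bm{S}_i}_{\op}\le 2d/M$. For the variance, $\bP_i^2=\bI$ gives $\sum_i\Ee\bm{S}_i^2=\frac{d^2}{M}\Ee[\varepsilon_i^2]\,\bI$. The key is then the bound $\Ee\varepsilon_i^2\lesssim V_{\ell,K,d}$. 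Decompose $\varepsilon_i=(\overline Y_i-\Tr(\bP_i\brhostar))+(\widetilde Y_i-\overline Y_i)$. The cross term vanishes after conditioning on $\overline Y_i$, again by mean preservation. The first part has conditional variance at most $1/\ell$ given $\bP_i$, and the second part is bounded by \eqref{eq:quant-var}. Therefore $\Ee\varepsilon_i^2\le V_{\ell,K,d}$ exactly, with no dependence on $\bP_i$ needed after averaging.

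Matrix Bernstein with $t\asymp\sqrt{d^2V\log d/M}+\frac{d\log d}{M}$ then gives, with probability at least $1-d^{-10}/2$,
\[
\norm{\bW}_{\op}\lesssim d\max\left\{\sqrt{\frac{V_{\ell,K,d}\log d}{M}},\frac{\log d}{M}\right\}.
\]
The dimension factor $2d$ in the Bernstein tail is absorbed into the constant multiplying $\log d$. Substituting this into the RIP bound and taking a union bound over the two events gives \eqref{eq:ls-bound}.

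The expected obstacle is the RIP step, since its constant and $\log^7 d$ factor are imported rather than proved. The only genuinely new ingredient is the variance computation, which relies on the quantizer's auxiliary randomness being independent of everything else, so that the conditioning arguments go through.
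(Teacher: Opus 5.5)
Your proposal is correct and follows essentially the same route as the paper: the basic inequality from global optimality, the rank-$2r$ Pauli RIP of Liu with $\delta=1/2$, and matrix Bernstein for $\frac dM\sum_i\varepsilon_i\bP_i$ (which is $-\nabla\mathcal L_M(\brhostar)$), with the variance split into shot noise plus the mean-preserving quantization term bounded by \eqref{eq:quant-var}. The only cosmetic difference is that you bound $\ip{\bDelta}{\bW}\le\sqrt{2r}\norm{\bDelta}_{\F}\norm{\bW}_{\op}$ directly, whereas the paper first passes through the restricted norm $\norm{\cdot}_{(2r)}$ via von Neumann's trace inequality and then reduces it to $\sqrt{2r}$ times the operator norm, giving the same bound.
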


The variance proxy $V_{\ell,K,d}$
combines the finite-shot and quantization contributions, represented by the first and second terms in \eqref{eq:effective-variance}, respectively. The following remarks summarize the resulting bit--shot tradeoff and resource requirements.

\begin{remark}
\textit{(i) Bit--shot matching.}
To retain the statistical order of the unquantized estimator, it is sufficient to make the quantization contribution in $V_{\ell,K,d}$ no larger than the intrinsic shot-noise scale $1/\ell$. Since \(\nu_{\ell,d}=\frac1\ell\left(1+\frac{\ell-1}{d}\right), \) this is ensured, for a sufficiently large absolute constant $C_3>0$, by
\begin{equation}\label{eq:bit-shot-matching}
K\ge C_3A_{\ell,d}\sqrt{1+\frac{\ell-1}{d}}.
\end{equation}
Under \eqref{eq:bit-shot-matching}, we have $V_{\ell,K,d}\le 2/\ell$. Hence, under the sample-size in  \Cref{thm:ls}, with probability at least $1-d^{-10}$, $\norm{\widehat{\brho}_{\rm LS}-\brhostar}_{\F}\le \sqrt2\,C_2d\sqrt r \max\left\{\sqrt{\frac{\log d}{\ell M}},\frac{\log d}{M}\right\}$. Moreover, \eqref{eq:bit-shot-matching} is satisfied by
\[
K\gtrsim\begin{cases}\log(1+\sqrt{\ell}), & \ell\le d,\\[0.3em]
\sqrt{\ell/d}\,\log(ed), & \ell>d, \end{cases}
\quad\text{~or~}\quad  b=
\begin{cases}
\cO(\log\log(e+\ell)), & \ell\le d,\\[0.3em] \frac12\log_2(\ell/d)+\cO(\log\log(ed)), & \ell>d.
\end{cases}\]
Thus, when $\ell\le d$, only logarithmically many quantization levels are needed, whereas for $\ell>d$ the required alphabet size grows as $\sqrt{\ell/d}$ up to a logarithmic factor.

\textit{(ii) Accuracy-constrained resource allocation.} For a target Frobenius accuracy $\varepsilon$, under \eqref{eq:bit-shot-matching}, it is sufficient to take $M\gtrsim\max\left\{rd\log^7d,\frac{d\sqrt r\log d}{\varepsilon}\right\},  M\ell\gtrsim\frac{rd^2\log d}{\varepsilon^2}$. Thus, in the accuracy-limited regime $M\lesssim rd^2\log d/\varepsilon^2$, choosing  $\ell \asymp \frac{rd^2\log d}{M\varepsilon^2}$ yields $N=M\ell=\widetilde{\cO}\left(\frac{rd^2}{\varepsilon^2}\right)$. For fixed admissible $M$, the corresponding bit budget is
\[B=Mb=\begin{cases}
\cO\!\left(M\log\log(e+\ell)\right), & \ell\le d,\\[0.4em]
\cO\!\left(M\left[\frac12\log(\ell/d)+\log\log(ed)\right]\right), & \ell>d.
\end{cases}\]
For comparison, lossless fixed-length transmission of the unquantized batch counts requires $B_{\rm lossless}=M\left\lceil\log_2(\ell+1)\right\rceil$. In particular, when $\ell=\Theta(d)$, the response budget is reduced from $\cO(M\log d)$ to $\cO(M\log\log d)$.
\end{remark}

\begin{remark}
\textit{Comparison with uniform quantization.} For a uniform $K$-level stochastic quantizer on $[-1,1]$, the quantization variance is bounded by $(K-1)^{-2}$. Matching this with the shot-noise scale $1/\ell$ therefore requires $K\gtrsim\sqrt{\ell}$. In contrast, HyperQuant requires only $K\gtrsim\log(1+\sqrt{\ell})$ when $\ell\le d$, and $K\gtrsim\sqrt{\ell/d}\log(ed)$ when $\ell>d$, while retaining the same order of least-squares accuracy.
\end{remark}

\section{Riemannian Optimization for Quantized QST}\label{sec:rgd}

The least-squares results in \Cref{sec:least-squares} characterize a minimizer but do not provide an efficient procedure for computing one. We therefore develop a Riemannian gradient method for quantized QST that directly optimizes the same least-squares objective formed from the quantized responses, with provable convergence guarantees. The proposed algorithm is dubbed \textit{QuantRGD} and summarized in \Cref{alg:rgd}.

\begin{algorithm}[H]
\caption{\textbf{Quant}ized QST via \textbf{R}iemannian \textbf{G}radient \textbf{D}escent (QuantRGD)
}\label{alg:rgd}
\begin{algorithmic}[1]
\REQUIRE $\{(\bP_i,\widetilde Y_i)\}_{i=1}^M$: quantized observations; $r$: rank.
\STATE $\brho_0=\mathcal H_r(\widetilde{\brho}_0)$, where $\widetilde{\brho}_0=\frac dM\sum_{i=1}^M\widetilde Y_i\bP_i$
\FOR{$k=0,1,\ldots$}
\STATE $\bG_k=\mathcal P_{T_{\brho_k}}\bigl(\nabla\mathcal L_M(\brho_k)\bigr)$
\STATE $\alpha_k=\norm{\bG_k}_{\F}^2/\norm{\A_M(\bG_k)}_2^2$
\STATE $\brho_{k+1}=\Pi_{\D_r}(\brho_k-\alpha_k\bG_k)$
\ENDFOR
\ENSURE $\widehat{\brho}=\brho_k$: recovered state.
\end{algorithmic}
\end{algorithm}

\subsection{Fixed-rank formulation and QuantRGD} The estimator in \eqref{eq:ls-estimator} involves a nonconvex rank constraint. For computation, we retain the same least-squares objective formed from the quantized responses and exploit its fixed-rank geometry,
\begin{equation*}
\min_{\bX\in\D_r}\ \mathcal L_M(\bX)=\frac{1}{2}\norm{\A_M(\bX)-\bm{y} }_2^2, \end{equation*}
where $\bm{y}=\sqrt{\frac dM}[\widetilde Y_1,\widetilde Y_2,\cdots,\widetilde Y_M ]^\top$ and the sampling operator is defined by
\begin{equation}\label{def:calA}
\A_M(\bX)=\sqrt{\frac dM}\bigl(\Tr(\bP_1\bX),\ldots,\Tr(\bP_M\bX)\bigr)^\top.
\end{equation}

We first introduce the fixed-rank Riemannian geometry.  Let $\mathbb H_d=\{\bZ\in\mathbb C^{d\times d}:\bZ=\bZ^\dagger\}$ denote the real Hilbert space of Hermitian matrices equipped with the Frobenius inner product, and define
\[
\mathcal M_r^{\rm H}=\{\bX\in\mathbb H_d:\rank(\bX)=r\},\qquad \mathcal M_r^+=\{\bX\in\mathbb H_d:\bX\succeq\bm 0,\ \rank(\bX)=r\}.
\]
For any $\bX\in\mathcal M_r^{\rm H}$ with compact eigendecomposition $\bX=\bU\bLambda\bU^\dagger$ and range projector $\bP_{\bX}=\bU\bU^\dagger$, the tangent space is $T_{\bX}\mathcal M_r^{\rm H}=\{\bU\bZ^\dagger+\bZ\bU^\dagger:\bZ\in\mathbb C^{d\times r}\}$.
The orthogonal projection of a Hermitian matrix $\bZ$ onto this tangent space has the closed form
\begin{equation}\label{eq:tangent-projection-hermitian}
\mathcal P_{T_{\bX}}(\bZ) =\bP_{\bX}\bZ+\bZ\bP_{\bX}-\bP_{\bX}\bZ\bP_{\bX}.
\end{equation}
With these geometric ingredients in place, QuantRGD proceeds as follows. Starting from $\brho_0 \in M_r^{\rm H}$, each iteration projects the Euclidean gradient onto the tangent space to obtain the Riemannian gradient, takes a descent step, and projects the resulting trial point onto $\D_r$ to enforce the density-matrix constraints. Specifically, given the current rank-$r$ state estimate $\brho_k$, the update is 
\begin{equation*} 
\bG_k=\mathcal P_{T_{\brho_k}} \bigl(\nabla\mathcal L_M(\brho_k)\bigr) \quad\textnormal{and}\quad \brho_{k+1}=\Pi_{\D_r}(\brho_k-\alpha_k\bG_k),
\end{equation*}
where $\bG_k$ is the Riemannian gradient and $\alpha_k>0$ is the stepsize. Since $\mathcal L_M$ is quadratic, exact line search along $-\bG_k$ gives
\begin{equation*} 
\alpha_k=\frac{\norm{\bG_k}_{\F}^2}{\norm{\A_M(\bG_k)}_2^2}.
\end{equation*}
The metric projection $\Pi_{\D_r}$ maps a Hermitian matrix to a nearest point in $\D_r$ under the Frobenius norm.  Let $\bW=\sum_{j=1}^d\lambda_j(\bW)\bu_j\bu_j^\dagger$ with $\lambda_1(\bW)\ge\cdots\ge\lambda_d(\bW)$. Then
\begin{equation}\label{eq:density-projection-spectral}
\Pi_{\D_r}\bW=\sum_{j=1}^r\mu_j\bu_j\bu_j^\dagger,\textnormal{ where } \bmu=\argmin_{\substack{\bz\in\mathbb R_+^r,\bone^\top\bz=1}}\norm{\bz-(\lambda_1(\bW),\ldots,\lambda_r(\bW))^\top}_2.
\end{equation}
Thus, $\Pi_{\D_r}$ can be computed by a leading $r$ Hermitian eigensolve followed by a simplex projection of the $r$ largest algebraic eigenvalues. 
The projection satisfies the following properties recorded in \Cref{lem:density-projection}, whose proof is given in \Cref{sec:proof_lmm41}.
\begin{lemma}\label{lem:density-projection} For every $\bW\in\mathbb H_d$ and $\brho\in\D_r$, we have
\begin{equation}\label{eq:density-projection-stability}
\norm{\Pi_{\D_r}(\bW)-\brho}_{\F}\le 2\norm{\bW-\brho}_{\F}.
\end{equation}
If, in addition, $\brho$ has rank $r$ and $\norm{\bW-\brho}_{\op}\le\frac{\lambda_r(\brho)}{4}$, then $\Pi_{\D_r}(\bW)$ also has rank $r$.
\end{lemma}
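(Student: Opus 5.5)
The plan is to prove both claims using the metric-projection structure of $\Pi_{\D_r}$ and Weyl-type eigenvalue perturbation.

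\emph{Stability bound.} Since $\Pi_{\D_r}(\bW)$ is a nearest point of $\D_r$ to $\bW$ in Frobenius norm, and $\brho\in\D_r$ is itself a feasible competitor, we have $\norm{\Pi_{\D_r}(\bW)-\bW}_{\F}\le\norm{\brho-\bW}_{\F}$. The triangle inequality then gives
\[
\norm{\Pi_{\D_r}(\bW)-\brho}_{\F}\le\norm{\Pi_{\D_r}(\bW)-\bW}_{\F}+\norm{\bW-\brho}_{\F}\le2\norm{\bW-\brho}_{\F}.
\]
No convexity is needed; we only need that the formula \eqref{eq:density-projection-spectral} actually realizes a global minimizer over $\D_r$. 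I would justify this via von Neumann's trace inequality. For $\bX\in\D_r$ with eigenvalues $z_1\ge\cdots\ge z_r\ge0$, we have $\norm{\bX-\bW}_{\F}^2\ge\sum_{j\le r}(z_j-\lambda_j(\bW))^2+\sum_{j>r}\lambda_j(\bW)^2$, with equality when $\bX$ shares eigenvectors with $\bW$. The simplex projection of a sorted vector is itself sorted, so the ordering constraint on $z$ is harmless. This reduces the problem to the simplex projection.

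\emph{Rank preservation.} Write $\lambda=\lambda_r(\brho)>0$ and set $\delta=\norm{\bW-\brho}_{\op}\le\lambda/4$. By Weyl's inequality, $\lambda_j(\bW)\ge\lambda_j(\brho)-\delta\ge\tfrac34\lambda$ for $j\le r$. The simplex projection of $v=(\lambda_1(\bW),\ldots,\lambda_r(\bW))$ has the form $\mu_j=(v_j-\tau)_+$, where the threshold $\tau$ is determined by $\sum_j(v_j-\tau)_+=1$. It therefore suffices to show $\tau<v_r$, i.e. $\tau<\min_j v_j$. If $\tau\le0$ this holds trivially. If $\tau>0$, then $\tau\le(\sum_j v_j-1)/r$ (equality when all entries are active). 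Now
\[
\sum_{j\le r}v_j\le\sum_{j\le r}\lambda_j(\brho)+\sum_{j\le r}\bigl(\lambda_j(\bW)-\lambda_j(\brho)\bigr)\le1+r\delta,
\]
using $\Tr\brho=1$ and Weyl's inequality. Hence $\tau\le\delta\le\lambda/4<\tfrac34\lambda\le v_r$. Thus every $\mu_j=v_j-\tau>0$, and $\Pi_{\D_r}(\bW)$ has rank exactly $r$. The $\bu_j$ are orthonormal, which gives rank $r$.

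\emph{Main obstacle.} The only delicate step is bounding $\tau$. When not all coordinates are active, the active set $S$ gives $\tau=(\sum_{S}v_j-1)/|S|$, and I must rule out a large $\tau$. I would handle this by contradiction: if some coordinate were inactive, then $\tau\ge v_r\ge\tfrac34\lambda$. On the other hand, the sum condition gives $\sum_{S}(v_j-\tau)=1$, so $\sum_S v_j=1+|S|\tau$. But $\sum_S v_j\le\sum_S\lambda_j(\brho)+|S|\delta\le1-\sum_{j\notin S}\lambda_j(\brho)+|S|\delta$, which yields $\tau\le\delta<v_r$, a contradiction. This settles the remaining case.
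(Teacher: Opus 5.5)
Your proof is correct and follows essentially the same route as the paper. The stability bound is the same nearest-point and triangle-inequality argument, with the same von Neumann reduction to a simplex projection. The rank claim rests on the same estimate: the simplex threshold is at most $\delta$, because $\sum_{j\le r}\lambda_j(\bW)\le 1+r\delta$ by $\Tr\brho=1$ and Weyl, while $\lambda_r(\bW)\ge\frac34\lambda_r(\brho)$.

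The only difference is in how the simplex step is certified. The paper writes down the all-active candidate $\tau_0=\frac1r\bigl(\sum_{j\le r}\lambda_j(\bW)-1\bigr)$. It notes $|\tau_0|\le\delta$, so that $\lambda_j(\bW)-\tau_0\ge\frac12\lambda_r(\brho)>0$, and then certifies this candidate through KKT and strict convexity. You instead use $\mu_j=(v_j-\tau)_+$ and exclude inactive coordinates by contradiction.

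One sentence in your rank argument should be removed: the general claim that $\tau>0$ implies $\tau\le\frac1r(\sum_j v_j-1)$. It is backwards. The map $t\mapsto\sum_j(v_j-t)_+$ is nonincreasing and is at least $1$ at $t_0=\frac1r(\sum_j v_j-1)$, so one always has $\tau\ge t_0$, with equality when all coordinates are active. For instance, $v=(3,0.1)$ gives $\tau=2>t_0=1.05$. Invoking the claim before you know all coordinates are active is circular.

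Your final contradiction argument is correct and suffices on its own. It shows that no coordinate can be inactive, so every $\mu_j>0$, and the case split on the sign of $\tau$ is unnecessary.
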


To initialize the nonconvex iteration, we form the spectral backprojection directly from the quantized responses, $\widetilde{\brho}_0= \A^*_M(\bm{y})=\frac dM\sum_{i=1}^M\widetilde Y_i\bP_i$, and set $\brho_0=\mathcal H_r(\widetilde{\brho}_0)$, where $\mathcal H_r$ denotes the rank-$r$ spectral truncation obtained by retaining the $r$ eigencomponents corresponding to the largest eigenvalues in magnitude.

\paragraph{Computational complexity} 
Each QuantRGD iteration costs $\cO(Mdr+dr^2+r^3)$ flops and requires $\cO(dr)$ working memory. Under the convergence conditions established in the following section, each $\bW_k:=\brho_k-\alpha_k\bG_k$ has at least $r$ positive eigenvalues, so its top $r$ eigenspace is contained in $\operatorname{range}(\bW_k)$.  Writing $\brho_k=\bU_k\bLambda_k\bU_k^\dagger$ and $\bZ_k=\nabla\mathcal L_M(\brho_k)$, we have $\operatorname{range}(\bW_k)\subseteq\operatorname{span}\{\bU_k,\bZ_k\bU_k\}$,  whose dimension is at most $2r$. Hence, after forming $\bZ_k\bU_k$, a thin QR factorization reduces the projection $\Pi_{\D_r}(\bW_k)$ to an eigendecomposition of a Hermitian matrix of order at most $2r$, followed by projection of its $r$ largest algebraic eigenvalues onto the probability simplex. Computing the Riemannian gradient and exact line search cost $\cO(Mdr)$ flops, while the projection onto $\D_r$ costs $\cO(dr^2+r^3)$ flops.  
\subsection{Convergence guarantees}
We establish convergence guarantees for QuantRGD under finite-bit Pauli responses. We first control the spectral initializer by bounding the operator-norm error of the backprojection and then converting it into a Frobenius-norm bound for its rank-$r$ truncation. 
\begin{theorem}\label{thm:init}
Suppose that $d\ge 2$, $\brhostar\in\D_r$ has rank $r$, and responses are generated by HyperQuant. There exists an absolute constant $C_4>0$ such that, with probability at least $1-d^{-10}$, we have 
\begin{equation}\label{eq:init-op}
\norm{\widetilde{\brho}_0-\brhostar}_{\op}
\le C_4\left[\sqrt{\frac{d\Tr(\brhostar^2)\log d}{M}}+d\sqrt{\frac{V_{\ell,K,d}\log d}{M}}+\frac{d\log d}{M}\right].
\end{equation}
If the right-hand side of \eqref{eq:init-op} is at most $\lambda_r(\brhostar)/4$, then $\brho_0=\mathcal H_r(\widetilde{\brho}_0)$ is positive semidefinite with rank $r$, and
\begin{equation}\label{eq:init-frobenius-bound}
\norm{\brho_0-\brhostar}_{\F}\le 2\sqrt{2r} \norm{\widetilde{\brho}_0-\brhostar}_{\op}.
\end{equation}
\end{theorem}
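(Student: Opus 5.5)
The operator-norm bound will come from a matrix Bernstein inequality applied to a sum of independent Hermitian matrices. The truncation bound is then deterministic perturbation theory.

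\textbf{Decomposition.} Write $\widetilde{\brho}_0-\brhostar=\frac1M\sum_{i=1}^M(\bm S_i+\bm T_i)$, where
\[
\bm S_i=d\,\Tr(\bP_i\brhostar)\bP_i-\brhostar,\qquad \bm T_i=d\,(\widetilde Y_i-\Tr(\bP_i\brhostar))\bP_i .
\]
Since $\Ee_{\bP}[d\Tr(\bP\brhostar)\bP]=\brhostar$ by Pauli completeness, and $\Ee[\widetilde Y_i\mid\bP_i]=\Tr(\bP_i\brhostar)$ by mean preservation and the tower property, both families have mean zero. Across $i$ they are independent.

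\textbf{Bounding the $\bm S_i$ term.}
\begin{itemize}
\item Uniform bound: $\norm{\bm S_i}_{\op}\le d+1$.
\item Variance: $\Ee\bm S_i^2\preceq d^2\,\Ee[\Tr(\bP\brhostar)^2]\bI=d\Tr(\brhostar^2)\bI$, using $\bP^2=\bI$ and $\Ee_{\bP}\Tr(\bP\brhostar)^2=\Tr(\brhostar^2)/d$.
\end{itemize}
Matrix Bernstein with dimension factor $2d$ and failure probability $d^{-11}$ then gives the first and third terms:
\[
\sqrt{\frac{d\Tr(\brhostar^2)\log d}{M}}+\frac{d\log d}{M}.
\]

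\textbf{Bounding the $\bm T_i$ term.}
\begin{itemize}
\item Uniform bound: $\norm{\bm T_i}_{\op}\le 2d$, since $|\widetilde Y_i|\le1$.
\item Variance: $\Ee\bm T_i^2=d^2\,\Ee(\widetilde Y_i-\Tr(\bP_i\brhostar))^2\,\bI$. Split the error as $(\widetilde Y_i-\overline Y_i)+(\overline Y_i-\Tr(\bP_i\brhostar))$. The cross term vanishes because $\Ee[\widetilde Y_i-\overline Y_i\mid\overline Y_i,\bP_i]=0$. The shot-noise part is at most $1/\ell$, and the quantization part is bounded by \eqref{eq:quant-var}. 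Hence $\Ee\bm T_i^2\preceq d^2V_{\ell,K,d}\bI$.
\end{itemize}
Bernstein gives $d\sqrt{V_{\ell,K,d}\log d/M}+d\log d/M$. A union bound over the two events yields \eqref{eq:init-op} with probability at least $1-d^{-10}$.

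This variance computation is the delicate point. It requires conditioning correctly on the three layers of randomness — the Pauli setting, the binary outcomes, and the quantizer's auxiliary randomness — so that the cross terms vanish. It also requires care that \eqref{eq:quant-var} is an unconditional bound, which is exactly what is needed after averaging over $\bP_i$.

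\textbf{Truncation.} Let $\delta=\norm{\widetilde{\brho}_0-\brhostar}_{\op}\le\lambda_r(\brhostar)/4$. By Weyl's inequality, the top $r$ eigenvalues of $\widetilde{\brho}_0$ are at least $\frac34\lambda_r(\brhostar)>0$, and all remaining eigenvalues lie in $[-\delta,\delta]$. So the $r$ eigenvalues largest in magnitude are exactly these positive ones. Therefore $\brho_0$ is positive semidefinite of rank $r$ and coincides with the best rank-$r$ approximation of $\widetilde{\brho}_0$.

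Now $\brho_0-\brhostar$ has rank at most $2r$, so
\[
\norm{\brho_0-\brhostar}_{\F}\le\sqrt{2r}\,\norm{\brho_0-\brhostar}_{\op}.
\]
Finally, since $\brho_0$ is the best rank-$r$ approximation (Eckart--Young) and $\brhostar$ has rank $r$,
\[
\norm{\brho_0-\brhostar}_{\op}\le\norm{\brho_0-\widetilde{\brho}_0}_{\op}+\norm{\widetilde{\brho}_0-\brhostar}_{\op}\le2\delta,
\]
which gives \eqref{eq:init-frobenius-bound}.
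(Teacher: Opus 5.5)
Your proposal is correct and follows essentially the same route as the paper: the same two-term decomposition of $\widetilde{\brho}_0-\brhostar$, with matrix Bernstein applied to each piece. The paper handles the quantized-noise term by citing its score lemma, whose proof is exactly your three-layer variance computation. The truncation step is the same Weyl, Eckart--Young, and rank-$2r$ argument.
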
 

Combining \Cref{thm:init} with deterministic contraction (proof in \Cref{sec:proof-local}) yields the following global recovery guarantee.

\begin{theorem}\label{thm:global recovery}
Suppose that $d\ge 2$, $\brhostar\in\D_r$ has rank $r$, responses are generated by HyperQuant, and the bit--shot condition \eqref{eq:bit-shot-matching} holds. Let $N=M\ell$. There exists an absolute constant $C_5>0$ such that if
\begin{align}
M&\ge C_5\max\left\{rd\log^7d,\frac{rd\Tr(\brhostar^2)\log d}{\lambda_r^2(\brhostar)},\frac{d\sqrt r\log d}{\lambda_r(\brhostar)}\right\},\label{eq:rgd-M-condition}\\
N&\ge C_5\frac{rd^2\log d}{\lambda_r^2(\brhostar)},\label{eq:rgd-N-condition}
\end{align}
then, with probability at least $1-2d^{-10}$, the iterates generated by \Cref{alg:rgd} satisfy
\begin{equation}\label{eq:global recovery}
\norm{\brho_k-\brhostar}_{\F}\le 4^{-k}\norm{\brho_0-\brhostar}_{\F}+C_5d\sqrt r\left[\sqrt{\frac{\log d}{N}}+\frac{\log d}{M}\right].
\end{equation}
\end{theorem}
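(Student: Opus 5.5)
The proof combines \Cref{thm:init} with a deterministic one-step contraction for the QuantRGD iteration on a single high-probability event. The event is the intersection of three events: (a) the initialization bound \eqref{eq:init-op}; (b) a restricted isometry property of $\A_M$ over Hermitian matrices of rank at most $Cr$ (say $6r$), with constant $\delta\le c$ small, which holds when $M\ge C_5 rd\log^7 d$ by the Pauli RIP results of Liu \citep{liu2011universal}; (c) a noise bound $\norm{\A_M^*(\A_M(\brhostar)-\bm y)}_{(2r)}\le C\sqrt r\,d\bigl[\sqrt{V_{\ell,K,d}\log d/M}+\log d/M\bigr]$. Event (c) follows from matrix Bernstein applied to $\frac dM\sum_i(\Tr(\bP_i\brhostar)-\widetilde Y_i)\bP_i$. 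These summands are conditionally mean zero by mean preservation, bounded by $2d/M$ in operator norm, with variance proxy $\frac{d^2}{M}V_{\ell,K,d}$ from \eqref{eq:quant-var} and the shot variance $1/\ell$. Then $\norm{\cdot}_{(2r)}\le\sqrt{2r}\norm{\cdot}_{\op}$. This is the same estimate used for \Cref{thm:ls}, so I would reuse it. A union bound gives probability $1-2d^{-10}$.

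Under \eqref{eq:bit-shot-matching}, $V_{\ell,K,d}\le 2/\ell$, so the noise level is $E:=Cd\sqrt r[\sqrt{\log d/N}+\log d/M]$. Next, conditions \eqref{eq:rgd-M-condition} and \eqref{eq:rgd-N-condition} make each term on the right of \eqref{eq:init-op} at most $c\lambda_r(\brhostar)/\sqrt r$. In particular, $V\log d/M\lesssim \log d/N$, so $d\sqrt{V\log d/M}\lesssim d\sqrt{\log d/N}\le c\lambda_r/\sqrt r$. Then \Cref{thm:init} gives $\norm{\brho_0-\brhostar}_{\F}\le c'\lambda_r(\brhostar)$, which places $\brho_0$ in a contraction region $\{\norm{\brho-\brhostar}_{\F}\le c'\lambda_r\}$, and the noise satisfies $E\le c''\lambda_r$.

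The deterministic step is as follows. Suppose $\brho_k\in\D_r$ has rank $r$ and $\norm{\brho_k-\brhostar}_{\F}\le c'\lambda_r$. Write $\bW_k=\brho_k-\alpha_k\bG_k$. By \Cref{lem:density-projection} it suffices to bound $\norm{\bW_k-\brhostar}_{\F}$, since the projection costs at most a factor $2$, and the rank-$r$ part of the lemma keeps the iterates of rank $r$. Decompose
\begin{multline*}
\bW_k-\brhostar=(\brho_k-\brhostar)-\alpha_k\mathcal P_{T_k}\A_M^*\A_M(\brho_k-\brhostar)\\
+\alpha_k\mathcal P_{T_k}\A_M^*(\A_M\brhostar-\bm y).
\end{multline*}
RIP gives $\alpha_k\in[1/(1+\delta),1/(1-\delta)]$. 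The first two terms are then handled in the standard way (as in Wei et al.\ and \citep{hsu2024rgd}). Split $\brho_k-\brhostar$ into its tangent and normal parts. The normal part $(\bI-\mathcal P_{T_k})(\brhostar)$ has Frobenius norm at most $\norm{\brho_k-\brhostar}_{\F}^2/\lambda_r(\brhostar)$. The restricted operator satisfies $\norm{\mathcal P_{T_k}-\alpha_k\mathcal P_{T_k}\A_M^*\A_M\mathcal P_{T_k}}\lesssim\delta$. Together these give a bound $(C\delta+C\norm{\brho_k-\brhostar}_{\F}/\lambda_r)\norm{\brho_k-\brhostar}_{\F}\le\frac18\norm{\brho_k-\brhostar}_{\F}$ for small constants. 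The noise term is at most $\alpha_k\norm{\A_M^*(\cdot)}_{(2r)}\lesssim E$, because $T_k$ contains only matrices of rank at most $2r$. Hence $\norm{\brho_{k+1}-\brhostar}_{\F}\le\frac14\norm{\brho_k-\brhostar}_{\F}+CE$. The bound $E\le c''\lambda_r$ keeps the iterates in the region, and induction on $k$ plus summing the geometric series gives \eqref{eq:global recovery}.

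The main obstacle is the projection factor $2$. It must be absorbed into the contraction constant, so the RIP constant and the radius of the region have to be small enough that the pre-projection contraction factor is $\le1/8$. I also need to check that $\mathcal P_{T_k}$ applied to the noise is controlled by the $(2r)$-norm, and that the RIP level $6r$ covers all the differences and tangent vectors that appear.
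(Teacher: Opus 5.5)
Your proposal is correct and follows essentially the same route as the paper. The paper also intersects the RIP, initialization and score events and substitutes $V_{\ell,K,d}\le 2/\ell$ to put $\brho_0$ in the basin and make $\norm{\nabla\mathcal L_M(\brhostar)}_{(2r)}$ small relative to $\lambda_r(\brhostar)$. It then runs the same deterministic contraction as its \Cref{lem:local}: exact line-search bounds from RIP, \Cref{lem:normal} for the normal part, $(2r)$-norm control of the projected score, and a pre-projection factor $\frac{2}{21}\le\frac18$ that absorbs the factor $2$ from \Cref{lem:density-projection}.

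Two small corrections. First, $\brho_0=\mathcal H_r(\widetilde{\brho}_0)$ is only rank-$r$ positive semidefinite, not necessarily unit-trace, so the induction hypothesis at $k=0$ should be $\brho_0\in\mathcal M_r^+$ rather than $\D_r$; this is harmless, since the contraction step never uses the trace. Second, the noise term in your decomposition of $\bW_k-\brhostar$ should carry a minus sign.
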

The proofs of \Cref{thm:init,thm:global recovery} are deferred to \Cref{proofofthm:init,proofofglobalrecovery}, respectively. Thus, QuantRGD converges linearly to a statistical neighborhood of $\brhostar$, and the number of iterations required to reach this neighborhood is logarithmic in the ratio between the initialization error and the statistical radius. The conditions on $M$ ensure restricted isometry and control the random-design and bounded-increment terms, while the condition on $N$ controls the finite-shot and quantization contributions after bit--shot matching.  Since  $\Tr(\brhostar^2)\le r\lambda_1^2(\brhostar)$ and $\lambda_1(\brhostar)\ge 1/r$, the state-dependent setting requirement is at most $\widetilde \cO(\kappa^2r^2d)$, where $\kappa=\lambda_1(\brhostar)/\lambda_r(\brhostar)$, consistent with the scaling of the standard unquantized spectral-initialization analysis \citep{hsu2024rgd}. 

While \Cref{thm:ls} provides a statistical guarantee for every global least-squares minimizer, \Cref{thm:global recovery} provides global linear convergence of a computable two-stage method: the spectral initializer enters the contraction basin under explicit resource conditions, after which QuantRGD converges linearly to the statistical neighborhood.

\section{Numerical experiments}\label{sec:experiments}
We empirically examine the scalar distortion of HyperQuant, low-rank state recovery under a finite response-bit budget, and the runtime convergence of QuantRGD.  The response-level comparisons include the unquantized batch average $\overline{Y}$, the proposed HyperQuant response $Q^{\rm hyp}$, the uniform mean-preserving response $Q^{\rm unif}$, and direct raw-prefix transmission $Y_{\rm raw}^{(b)}$. Throughout the figures, $\overline{Y}$ denotes the full-batch average $\overline{Y}_i=\ell^{-1}\sum_{j=1}^{\ell}Y_{ij}$, encoded losslessly through its batch count.  The responses $Q^{\rm hyp}$ and $Q^{\rm unif}$ denote, respectively, the $K=2^b$-level HyperQuant and the uniform mean-preserving stochastic quantization of $\overline{Y}_i$, while $Y_{{\rm raw},i}^{(b)}$ averages only the first $\min\{b,\ell\}$ physical outcomes.  For the algorithmic comparisons, QuantRGD denotes \Cref{alg:rgd}, RGD denotes the rank-$r$ Riemannian gradient method of \citep{hsu2024rgd}, and MiFGD denotes the accelerated factorized method of \citep{kim2023fast}; the latter two use the unquantized response $\overline{Y}$.

 All experiments were run on an Apple M2 MacBook Air with 24\,GB of memory using MATLAB R2022b. Within each paired trial, all methods use the same state, Pauli settings, and physical outcomes.

\subsection{Scalar quantization}

We first test the scalar predictions independently of tomography. Panel (a) evaluates the weighted criterion $T_Q$ from \Cref{thm:weighted} for HyperQuant $Q^{\rm hyp}$ and uniform mean-preserving quantizer $Q^{\rm unif}$.

To examine the high-shot scaling in \Cref{thm:pauli-distortion}, we use the structured pure-state prior from its proof.  Write $d=2^n$, draw $k\sim\operatorname{Unif}\{2,\ldots,n-1\}$, set $m=2^k$, draw $\psi\sim\operatorname{Haar}(\mathbb C^m)$, and define $\brho_{k,\psi}=|\psi\rangle\langle\psi|\otimes |0\rangle\langle0|^{\otimes(n-k)}$. For a uniformly random Pauli matrix $\bP$, let $X=\Tr(\bP\brho_{k,\psi})$, and define the  Bayes distortion by
\[\mathcal D^{\rm str}_{d,K}:=\Ee_{k,\psi,\bP}\Var\!\left(Q^{\rm hyp}(X)\mid X\right),\]
where, in this high-shot experiment, $Q^{\rm hyp}$ uses the scale $\nu=1/d$.

Panel (a) uses $d=256$, $\ell=20d=5120$, and $b=1,\ldots,10$, with
$K=2^b$.  Panel (b) plots the normalized distortion $\frac{d(K-1)^2}{\log^2 d}\,\mathcal D^{\rm str}_{d,K}$ for $d\in\{2^5,\ldots,2^{16}\}$ and $b\in\{4,6,8\}$.  Each point in panel (b) is the mean of $20$ independent repetitions, each using $2\times10^5$ samples from the structured prior.  The dashed $K^{-2}$ curve in panel (a) indicates the predicted decay rate.

\Cref{fig:scalar_quantization}(a) shows that the weighted criterion for $Q^{\rm hyp}$ closely follows the predicted $K^{-2}$ decay.  At $K=16$, its value is approximately $20$ times smaller than that of $Q^{\rm unif}$.  \Cref{fig:scalar_quantization}(b) shows that the normalized Bayes distortion remains between approximately $0.5$ and $0.75$ over the tested dimensions and bit depths.  Its constant-order behavior and the close agreement across bit depths are consistent with the $\log^2 d/[d(K-1)^2]$ scaling in \Cref{thm:pauli-distortion}.

\begin{figure}
\centering
\subfloat[Weighted criterion \textit{vs.} $K=2^b$.]{\includegraphics[width=0.42\linewidth]{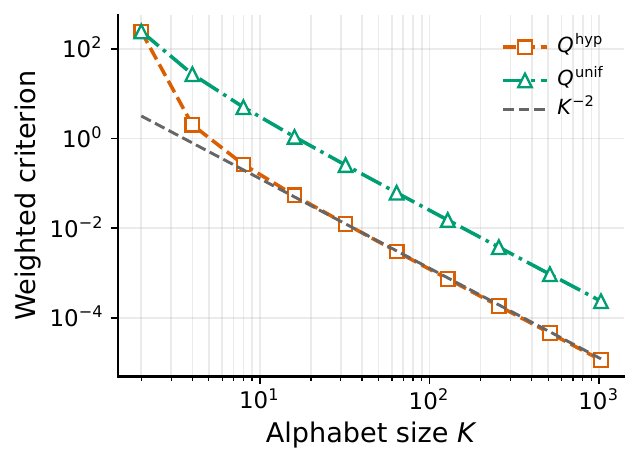}}
\qquad\qquad
\subfloat[Normalized pure-state distortion \textit{vs.} $d$.]{\includegraphics[width=0.42\linewidth]{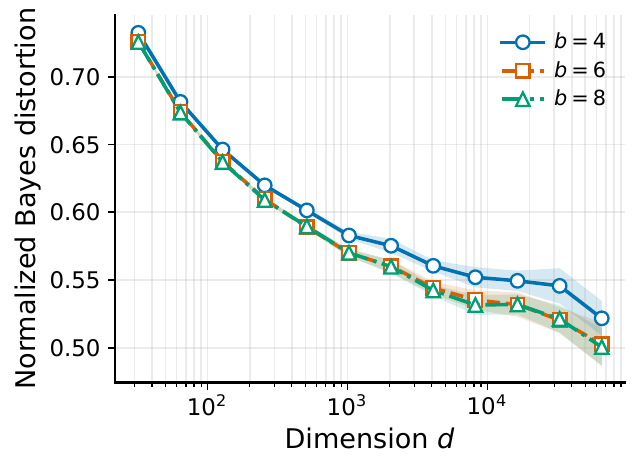}}
\caption{Scalar quantization.  Panel (a) plots the weighted variance criterion versus $K=2^b$ at $d=256$ and $\ell=20d$.  Panel (b) plots $d(K-1)^2\mathcal D^{\mathrm{str}}_{d,K}/\log^2d$ for $d=2^5,\ldots,2^{16}$ under the structured pure-state prior used in the proof of \Cref{thm:pauli-distortion}.  Curves are Monte Carlo means, and bands are 95\% confidence intervals over 20 independent repetitions.}
\label{fig:scalar_quantization}
\end{figure}

\subsection{Recovery performance}

We next examine how finite-bit response quantization affects low-rank state recovery.  In particular, we compare HyperQuant with the unquantized batch average, uniform mean-preserving quantization, and raw-prefix transmission under the same response-bit budget.

We use flat rank-three states with $d=2^7=128$ and $M=6rd=2304$. Panels (a) and (b) vary the number of copies using $b=4$ and $b=5$, respectively, while panel (c) fixes $\ell=20d=2560$ and varies $b=1,\ldots,7$.  All four response schemes are reconstructed using the same QuantRGD decoder, and each point summarizes $20$ paired trials. For trial $s$, define the quantization-to-batch-average error ratio
\[
R_s^{Q}(b,\ell) :=\frac{\|\widehat{\brho}^{(s)}_{Q}(b,\ell)-\brho_\star^{(s)}\|_{\F}}{\|\widehat{\brho}^{(s)}_{\overline Y}(\ell)-\brho_\star^{(s)}\|_{\F}}.
\]
with $Q\in\{Q^{{\rm hyp}},Q^{{\rm unif}},Y_{{\rm raw}}^{b}\}$.

\Cref{fig:recovery_performance} shows that HyperQuant remains close to the unquantized batch average and improves noticeably when the bit budget increases from four to five bits.  Panels (a) and (b) show that uniform quantization and raw-prefix transmission incur larger recovery errors at the same bit budget.  Panel (c) further shows that increasing $b$ closes the gap between HyperQuant and the unquantized response, whereas raw-prefix transmission remains less accurate because it uses only $b$ of the $\ell$ physical outcomes.

\begin{figure}[htp]
\centering
\subfloat[$b=4$: error \textit{vs.} $N$.]{\includegraphics[width=0.315\linewidth]{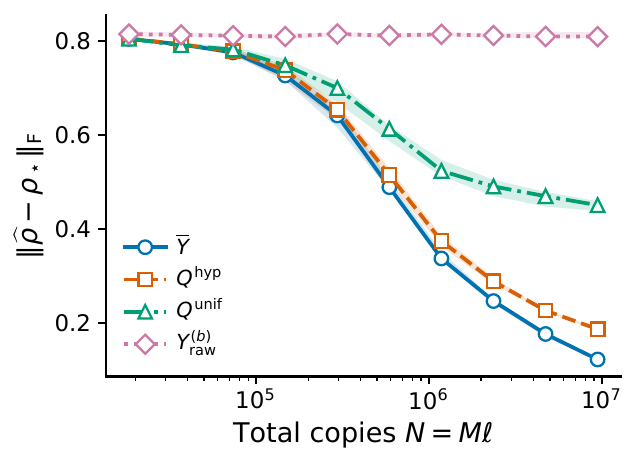}}
\hfill
\subfloat[$b=5$: error \textit{vs.} $N$.]{\includegraphics[width=0.315\linewidth]{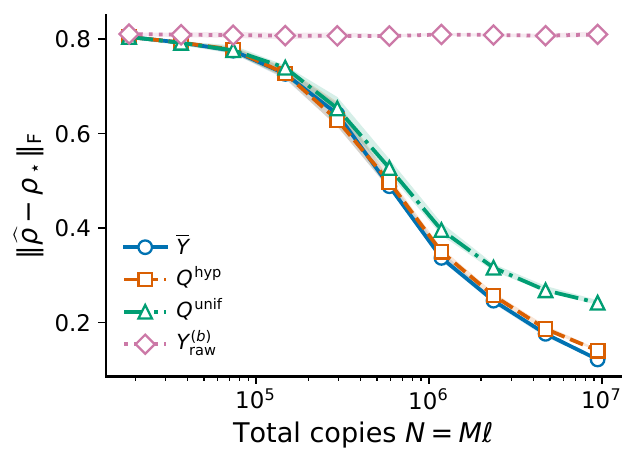}}
\hfill
\subfloat[Paired error ratio \textit{vs.} $b$.]{\includegraphics[width=0.315\linewidth]{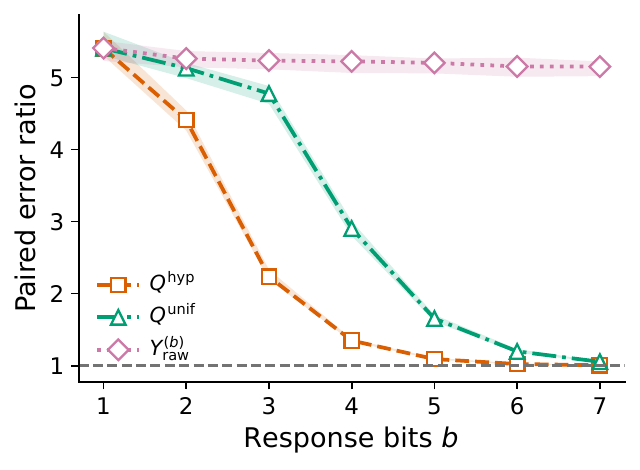}}
\caption{Recovery performance.  All panels use rank-three states with $d=2^7=128$ and $M=6rd=2304$.  Panels (a) and (b) plot reconstruction error versus $N=M\ell$ at $b=4$ and $b=5$, respectively; panel (c) plots the paired error ratio at $\ell=20d$.  Curves are medians with interquartile bands over 20 paired trials.}\label{fig:recovery_performance}
\end{figure}

\subsection{Convergence performance}
We examine the runtime convergence of QuantRGD with HyperQuant against RGD and MiFGD using the unquantized batch average $\overline Y$, together with their response-bit budgets.

All experiments use $r=3$, $M=6rd$, and $\ell=40d$.  Panels (a)--(d) consider $d\in\{2^5,2^7\}$ and $\kappa\in\{5,50\}$, with columns corresponding to $d=2^5,2^7$ and rows to $\kappa=5,50$.  The plotted error is $E_k:=\norm{\brho_k-\brhostar}_{\F}$. Within each paired trial, all methods share the target state, Pauli settings, and physical outcomes.  The algorithms run for at most $100$ iterations, and CPU time includes both initialization and iterative updates.  Since $\overline Y_i$ has $\ell+1$ possible batch counts, panel (e) uses response budgets $B_{\rm lossless} = M\left\lceil\log_2(\ell+1)\right\rceil$  for RGD and MiFGD, and $B=Mb$ for QuantRGD.

\begin{figure}[htp]
\centering
\subfloat[$d=2^5$ and $\kappa=5$.]{\includegraphics[width=0.33\linewidth]{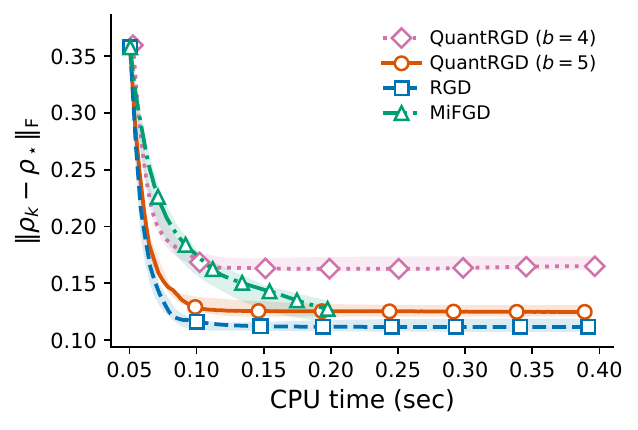}}
\subfloat[$d=2^7$ and $\kappa=5$.]{\includegraphics[width=0.33\linewidth]{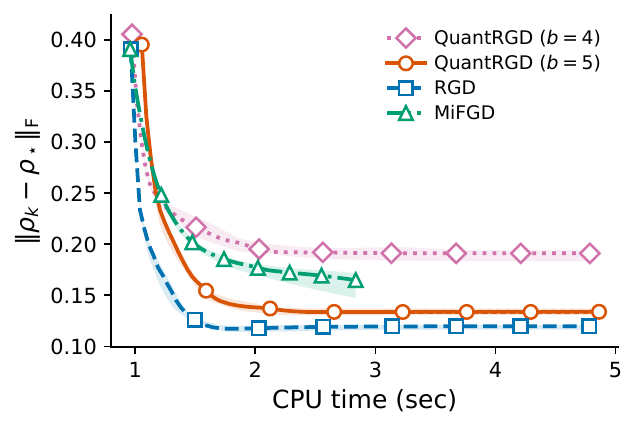}}
\makebox[0.33\linewidth]{}%
\par

\subfloat[$d=2^5$ and $\kappa=50$.]{\includegraphics[width=0.33\linewidth]{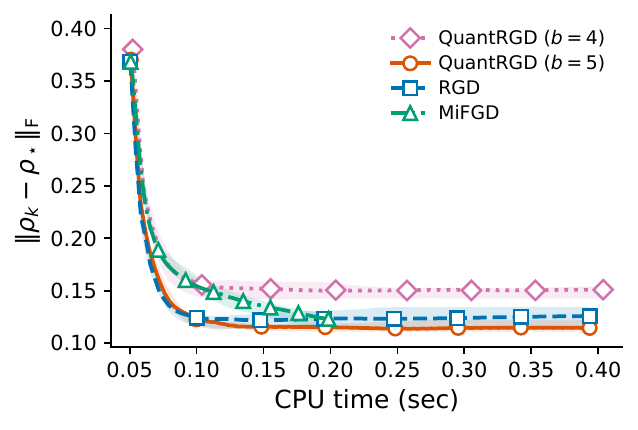}}
\subfloat[$d=2^7$ and $\kappa=50$.]{\includegraphics[width=0.33\linewidth]{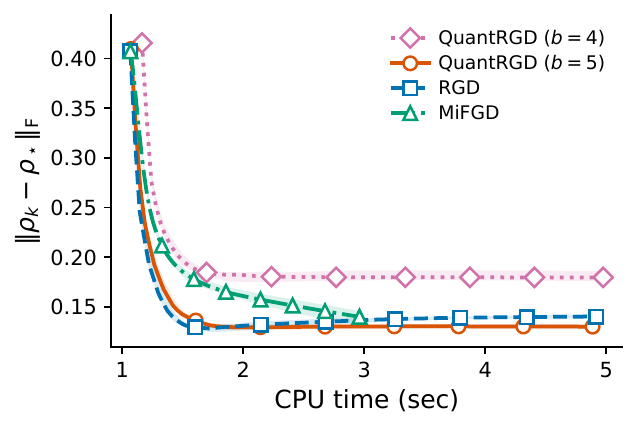}}
\subfloat[Total response budget $B$.]{\includegraphics[width=0.33\linewidth]{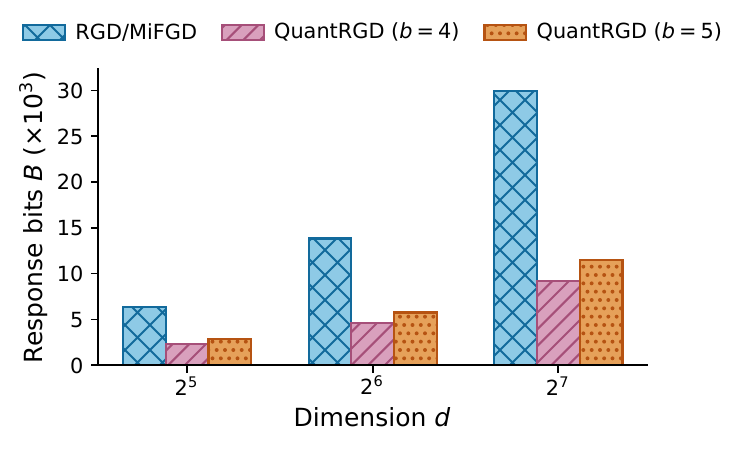}}

\caption{Convergence performance and response budget.  Panels (a)--(d) plot $\norm{\brho_k-\brhostar}_{\F}$ \textit{vs.} CPU time for $r=3$, $M=6rd$, and $\ell=40d$, with columns corresponding to $d=2^5,2^7$ and rows to $\kappa=5,50$.  Curves are medians with interquartile bands over ten paired trials.  Panel (e) compares the total response-bit budgets of unquantized RGD/MiFGD and QuantRGD with $b=4,5$.}
\label{fig:convergence_performance}
\end{figure}

\Cref{fig:convergence_performance} shows that QuantRGD with $b=5$ closely tracks unquantized RGD in both recovery error and runtime across the four tested settings, while $b=4$ reaches a higher finite-bit error floor. Panel (e) further shows that lossless transmission of $\overline Y$ requires approximately $2.2$--$2.6$ times the response-bit budget of QuantRGD with $b=5$ over the tested dimensions.

\section{Proofs of Main Results}\label{sec:proofs}\label{sec:proof-init} In this section, we provide the proofs of the quantization, least-squares, initialization, and convergence results. We follow the order in which the results appear in the main text. 

\subsection{Proof of \Cref{thm:weighted}}\label{sec:proof-weighted}
\begin{proof}[Proof of \Cref{thm:weighted}]
We prove the corresponding statement for an arbitrary scale $\nu \in (0,1]$ and then specialize to $\nu=\nu_{\ell,d}$.
Let
\[ A_\nu := \operatorname{arcsinh}(\nu^{-1/2}).\]
We first establish a pointwise lower bound that holds for every mean-preserving quantizer with a fixed ordered alphabet
\[-1=q_0<q_1<\cdots<q_{K-1}=1.\]
Fix $x\in[q_j,q_{j+1}]$ and let
\[    Z:=Q(x).\]
Since $Q$ is mean preserving, $\mathbb{E}[Z\mid x]=x$. Moreover, $Z$ takes values only in the alphabet $\{q_0,\ldots,q_{K-1}\}$. Since $q_j$ and $q_{j+1}$ are adjacent alphabet levels, every possible value of $Z$ satisfies either $Z\le q_j$ or $Z\ge q_{j+1}$. Hence $(Z-q_j)(Z-q_{j+1})\ge 0$. Taking conditional expectation and using $\mathbb{E}[Z\mid x]=x$ gives
\[
\begin{aligned}0 &\le \mathbb{E}\!\left[(Z-q_j)(Z-q_{j+1}) \mid x\right] =\mathbb{E}[Z^2\mid x]-(q_j+q_{j+1})x+q_jq_{j+1}.\end{aligned}
\]
Therefore,
\begin{equation}\label{eqn:opt_adj} \operatorname{Var}(Q(x)\mid x) = \mathbb{E}[Z^2\mid x]-x^2  \ge (q_j+q_{j+1})x-q_jq_{j+1}-x^2 = (x-q_j)(q_{j+1}-x). \end{equation}
Equality holds for adjacent stochastic rounding.

We next prove the upper bound. Consider the hyperbolic alphabet
\[q_j(\nu)=\sqrt{\nu}\,\sinh\!\left[\left(\frac{2j}{K-1}-1 \right)A_\nu\right],\qquad j=0,\ldots,K-1,\]
and let $Q_{K,\nu}$ denote adjacent stochastic rounding with respect to these levels. Write $x=\sqrt{\nu}\sinh y,  \tau:=\frac{2A_\nu}{K-1}$, and define $u_j:=-A_\nu+j\tau$. If $x\in[q_j(\nu),q_{j+1}(\nu)]$, then $y\in[u_j,u_j+\tau]$. Furthermore,
\[
\begin{aligned} q_{j+1}(\nu)-q_j(\nu) &=\sqrt{\nu}\int_{u_j}^{u_j+\tau}\cosh t\,dt. \end{aligned}
\]
For every $t\in[u_j,u_j+\tau]$, we have $|t-y|\le\tau$, and hence$\cosh t \le e^\tau \cosh y$. Consequently,
\[
\begin{aligned} q_{j+1}(\nu)-q_j(\nu) &\le\tau e^\tau \sqrt{\nu}\cosh y  =\tau e^\tau \sqrt{x^2+\nu}.\end{aligned}
\]
Since $Q_{K,\nu}$ uses adjacent stochastic rounding,
\[\operatorname{Var}(Q_{K,\nu}(x)\mid x)= (x-q_j)(q_{j+1}-x)\le\frac{1}{4}(q_{j+1}-q_j)^2.
\]
Thus, $\operatorname{Var}(Q_{K,\nu}(x)\mid x) \le \frac{\tau^2e^{2\tau}}{4}(x^2+\nu)$. Dividing by $x^2+\nu$ and substituting $\tau=2A_\nu/(K-1)$ yields
\[\sup_{x\in[-1,1]}\frac{\operatorname{Var}(Q_{K,\nu}(x)\mid x)}{x^2+\nu}\le\exp\!\left(\frac{4A_\nu}{K-1}\right)\frac{A_\nu^2}{(K-1)^2}.
\]
Since $\mathfrak V_K(\nu)$ is the infimum over all admissible quantizers, this also gives
\[
\mathfrak V_K(\nu)\le\sup_{x\in[-1,1]}\frac{\operatorname{Var}(Q_{K,\nu}(x)\mid x)}{ x^2+\nu}.
\]
It remains to prove the minimax lower bound. Fix an arbitrary admissible mean-preserving $K$-level quantizer $Q$ with alphabet $-1=q_0<q_1<\cdots<q_{K-1}=1$. Introduce the transformed alphabet locations
\[u_j:= \operatorname{arcsinh}\left( \frac{q_j}{\sqrt{\nu}}\right),\qquad  j=0,\ldots,K-1.
\]
Since $q_0=-1$ and $q_{K-1}=1$, we have $u_0=-A_\nu, u_{K-1}=A_\nu$.
Hence the transformed cell lengths satisfy
\(    \sum_{j=0}^{K-2}(u_{j+1}-u_j)=2A_\nu.
\) 
Therefore, there exists at least one index $j$ such that
\(
    u_{j+1}-u_j    \ge    \frac{2A_\nu}{K-1}.
\)
For this cell, write
\(   u_j=c-\delta,   u_{j+1}=c+\delta, \)
so that
\( \delta  = \frac{u_{j+1}-u_j}{2}  \ge  \frac{A_\nu}{K-1}.\) 
Equivalently,
\(
    q_j =  \sqrt{\nu}\sinh(c-\delta),   q_{j+1}  =  \sqrt{\nu}\sinh(c+\delta).\) 
Choose \(x:=\sqrt{\nu}\sinh c.\) Then $x\in[q_j,q_{j+1}]$. Applying~\eqref{eqn:opt_adj}, we obtain
\[\begin{aligned}    \frac{\operatorname{Var}(Q(x)\mid x)}        {x^2+\nu}   &\ge  \frac{(x-q_j)(q_{j+1}-x)}       {x^2+\nu} =   4\sinh^2\!\left(\frac{\delta}{2}\right)
    \frac{       \cosh(c-\delta/2)       \cosh(c+\delta/2)   }{        \cosh^2 c    }. \end{aligned} \]
Using the identity \(    \cosh(c-\delta/2)\cosh(c+\delta/2)=    \cosh^2 c+\sinh^2(\delta/2)    \ge \cosh^2 c,
\)
we obtain
\(    \frac{\operatorname{Var}(Q(x)\mid x)}       {x^2+\nu} \ge    4\sinh^2\!\left(\frac{\delta}{2}\right).
\)
Since $\sinh t\ge t$ for $t\ge0$,
\(  4\sinh^2\!\left(\frac{\delta}{2}\right)  \ge\delta^2\ge\frac{A_\nu^2}{(K-1)^2}.
\)
Thus, for every admissible quantizer $Q$,
\[
    \sup_{x\in[-1,1]}  \frac{\operatorname{Var}(Q(x)\mid x)}{x^2+\nu}   \ge   \frac{A_\nu^2}{(K-1)^2}.
\]
Taking the infimum over $Q$ gives
\(    \mathfrak V_K(\nu)    \ge    \frac{A_\nu^2}{(K-1)^2}.\)\\
Finally, setting $\nu=\nu_{\ell,d}$ gives $A_\nu=A_{\ell,d}$ and $Q_{K,\nu}=Q_{K,\ell,d}$, and therefore 
\[ \frac{A_{\ell,d}^2}{(K-1)^2}   \le    \mathfrak V_K(\nu_{\ell,d})    \le    \sup_{x\in[-1,1]}    \frac{        \operatorname{Var}(Q_{K,\ell,d}(x)\mid x)}{x^2+\nu_{\ell,d}}   \le \exp\!\left(\frac{4A_{\ell,d}}{K-1}\right) \frac{A_{\ell,d}^2}{(K-1)^2}.\]
\end{proof}
\subsection{Proof of \Cref{cor:moment-minimax}}
\begin{proof}[Proof of \Cref{cor:moment-minimax}]
For the lower bound, fix an arbitrary admissible quantizer $Q$ and let
\( T_Q  :=\sup_{x\in[-1,1]} \frac{\operatorname{Var}(Q(x)\mid x)}{x^2+\nu}. \)
We show that
\[   \sup_{\substack{\mu:\,\operatorname{supp}(\mu)\subset[-1,1]\\                  \mathbb E_\mu X^2\le\nu}}  \mathbb E_\mu\operatorname{Var}(Q(X)\mid X)\ge\nu T_Q.
\]

Let $\varepsilon>0$. By the definition of the supremum, there exists $x_\varepsilon\in[-1,1]$ such that
\[\frac{\operatorname{Var}(Q(x_\varepsilon)\mid x_\varepsilon)}{x_\varepsilon^2+\nu}\ge T_Q-\varepsilon.
\]
Equivalently,
$\operatorname{Var}(Q(x_\varepsilon)\mid x_\varepsilon)\ge (T_Q-\varepsilon)(x_\varepsilon^2+\nu)$.  We distinguish two cases. If $x_\varepsilon^2\le\nu$, take $\mu_\varepsilon=\delta_{x_\varepsilon}$. This distribution is admissible as $\mathbb E_{\mu_\varepsilon}X^2=x_\varepsilon^2\le\nu$. 
Hence, 
\[\begin{aligned} \mathbb E_{\mu_\varepsilon}\operatorname{Var}(Q(X)\mid X)&=\operatorname{Var}(Q(x_\varepsilon)\mid x_\varepsilon)\ge (T_Q-\varepsilon)(x_\varepsilon^2+\nu)\ge\nu(T_Q-\varepsilon).
\end{aligned} \]
If $x_\varepsilon^2>\nu$, the point mass $\delta_{x_\varepsilon}$ is not admissible because its second moment exceeds $\nu$. Instead, define
\[ \mu_\varepsilon:= \frac{\nu}{x_\varepsilon^2}\delta_{x_\varepsilon}+\left(1-\frac{\nu}{x_\varepsilon^2} \right)\delta_0. \]
Since $x_\varepsilon^2>\nu$, both coefficients are nonnegative and sum to one. Moreover,
\(\begin{aligned} \mathbb E_{\mu_\varepsilon}X^2&=\frac{\nu}{x_\varepsilon^2} x_\varepsilon^2 +\left(1-\frac{\nu}{x_\varepsilon^2}\right)0=\nu,
\end{aligned}\)
so $\mu_\varepsilon$ is admissible. Since conditional variances are nonnegative,
\[
\begin{aligned}  \mathbb E_{\mu_\varepsilon}  \operatorname{Var}(Q(X)\mid X) &= \frac{\nu}{x_\varepsilon^2} \operatorname{Var} (Q(x_\varepsilon)\mid x_\varepsilon) +\left(  1-\frac{\nu}{x_\varepsilon^2}\right) \operatorname{Var}(Q(0)\mid 0)\cr
    &\ge\frac{\nu}{x_\varepsilon^2} \operatorname{Var}(Q(x_\varepsilon)\mid x_\varepsilon)\ge\frac{\nu}{x_\varepsilon^2}(T_Q-\varepsilon)(x_\varepsilon^2+\nu)\cr
    &= \nu(T_Q-\varepsilon)\left(  1+\frac{\nu}{x_\varepsilon^2}\right)\ge \nu(T_Q-\varepsilon).
\end{aligned}\]
Therefore, in either case,  \[ \sup_{\substack{\mu:\,\operatorname{supp}(\mu)\subset[-1,1]\\ \mathbb E_\mu X^2\le\nu}} \mathbb E_\mu \operatorname{Var}(Q(X)\mid X) \ge \nu(T_Q-\varepsilon). \]
Since this holds for every $\varepsilon>0$, taking the limit as $\varepsilon$ tends to zero yields
\[ \sup_{\substack{\mu:\,\operatorname{supp}(\mu)\subset[-1,1]\\ \mathbb E_\mu X^2\le\nu}} \mathbb E_\mu \operatorname{Var}(Q(X)\mid X) \ge \nu T_Q. \] 
Finally, because it holds for every admissible $Q$, taking the infimum over $Q$ gives
\[
\begin{aligned} \inf_Q\sup_{\substack{\mu:\,\operatorname{supp}(\mu)\subset[-1,1]\\
  \mathbb E_\mu X^2\le\nu}} \mathbb E_\mu\operatorname{Var}(Q(X)\mid X)  &\ge\nu\inf_Q T_Q=\nu\mathfrak V_K(\nu).
\end{aligned}\]
Combining the above lower-bound argument with the upper-bound argument given immediately before \Cref{cor:moment-minimax}, setting $\nu=\nu_{\ell,d}$, and taking the infimum over all admissible quantizers $Q$ yields the claimed result.
\end{proof}

\subsection{Proof of \Cref{thm:pauli-distortion}}\label{sec:proof-pauli-distortion}
\begin{proof}[Proof of \Cref{thm:pauli-distortion}] For the upper bound, repeat the upper-bound construction in the proof of \Cref{thm:weighted} with $\nu=1/d$. Pauli orthogonality, $\Tr(\brho^2)\le1$, and the condition $2\arsinh(\sqrt d)/(K-1)\le1$ give
\[\Ee_{\bP}\Var(Q(\Tr(\bP\brho))\mid\Tr(\bP\brho))\le 2e^2\frac{\arsinh^2(\sqrt d)}{d(K-1)^2}. \]
Since $\arsinh(\sqrt d)\le\log d$ for $d\ge16$, the upper bound in \eqref{eq:pauli-distortion} follows.

We prove the lower bound by a minimax argument. Fix an arbitrary
mean-preserving $K$-level scalar quantizer $Q$. We will construct a
pure state $\rho\in\mathcal D_1$ such that
\(
    \mathbb E_P \operatorname{Var} \left(Q(\operatorname{Tr}(P\rho)) \mid \operatorname{Tr}(P\rho) \right)\gtrsim \frac{\log^2 d}{d(K+1)^2}.\)
Since the desired quantity is a supremum over pure states, it is enough to construct a prior over pure states for which the average distortion is large. Indeed, for any probability distribution $\Pi$ supported on $\mathcal D_1$,
\[
    \sup_{\rho\in\mathcal D_1} \mathbb E_P\operatorname{Var} \left( Q(\operatorname{Tr}(P\rho)) \mid \operatorname{Tr}(P\rho)  \right)  \ge\mathbb E_{\rho\sim\Pi}\mathbb E_P\operatorname{Var}\left(  Q(\operatorname{Tr}(P\rho))  \mid  \operatorname{Tr}(P\rho)\right).
\]
The main idea is to choose a multiscale prior. A Haar-random pure state in dimension $m$ produces Pauli coefficients of typical size $m^{-1/2}$. By mixing over dyadic dimensions \(  m=2^k,k=2,\ldots,n-1\),  we create a distribution of Pauli coefficients spanning all scales between $d^{-1/2}$ and a constant. This prevents any $K$-level quantizer from allocating all of its resolution to a single scale.

Let \( k\sim\operatorname{Unif}\{2,\ldots,n-1\},   m=2^k, \) and, conditioned on $k$, draw a Haar-distributed unit vector $\psi\in\mathbb C^m$. Define the $n$-qubit pure state \( \rho_{k,\psi}    :=|\psi\rangle\langle\psi| \otimes  |0\rangle\langle0|^{\otimes(n-k)}. \)
Let \(X=\operatorname{Tr}(P\rho_{k,\psi}),\) where $P$ is uniformly sampled from all $n$-qubit Pauli matrices. 

 We first characterize the distribution of $X$. Fix a nonidentity Pauli matrix $P_A$ acting on $\mathbb C^m$. Since $P_A$ has eigenvalues $\pm1$ of equal  multiplicity $m/2$, Haar invariance allows us to assume, without loss of generality, that
\[
    P_A= \begin{pmatrix} I_{m/2} & 0\\  0 & -I_{m/2}\end{pmatrix}.
\]
Let $\Pi_+$ and $\Pi_-$ denote the projectors onto the corresponding
eigenspaces, and define
\(   S:=\|\Pi_+\psi\|_2^2  =  \sum_{j=1}^{m/2}|\psi_j|^2. \) For a Haar-distributed unit vector $\psi\in\mathbb C^m$, the squared norm of the projection onto an $m/2$-dimensional subspace follows the standard Beta distribution:
\(  S\sim  \operatorname{Beta}\left(\frac m2,\frac m2\right) \) see \textit{e.g.}, \citep{mezzadri2007,zyczkowski2001induced}.

Since the eigenvalues of $P_A$ are $\pm1$, \(\langle\psi,P_A\psi\rangle  = \|\Pi_+\psi\|_2^2-\|\Pi_-\psi\|_2^2= S-(1-S)=2S-1.\)  Therefore,   the density of \( X=\langle\psi,P_A\psi\rangle \) is
\begin{equation}
\label{eqn:distributionofgmx}
g_m(x) =\frac{\Gamma(m)}{2^{m-1}\Gamma(m/2)^2}(1-x^2)^{m/2-1},  \qquad -1<x<1.
\end{equation}

 We next account for the ancilla qubits. Write
\(  P=P_A\otimes P_B .\) Then \(  X   =\langle\psi,P_A\psi\rangle   \left\langle  0^{\otimes(n-k)},    P_B0^{\otimes(n-k)}   \right\rangle .\) The ancilla factor is nonzero only when every tensor factor of $P_B$ is either $I$ or $\sigma_z$. Hence the fraction of Pauli strings with a nonzero ancilla contribution is
\(  \frac{2^{n-k}}{4^{n-k}} =\frac md .\)
Since $k$ is uniformly distributed over $\{2,\ldots,n-1\}$, the continuous component of the mixture density contains
\begin{equation}   \frac1{n-2}\frac md \left(1-\frac{1}{m^2}\right)g_m(x).    \label{eq:mixture-component} \end{equation}

We lower bound this density. For \(    x\in\left[\frac2{\sqrt d},\frac12\right], \)
choose a dyadic $m\in\{4,8,\ldots,\frac{d}{2}\}$ such that
\(  1\le mx^2<2. \) By Stirling's inequalities and the definition of $g_m$, \( g_m(x)\ge c\sqrt m .\)
Therefore, using \eqref{eq:mixture-component}, we have
\( f(x)\ge \frac{c}{n-2}\frac md\sqrt m = \frac{c m^{3/2}}{d(n-2)} .\)
Moreover, $1\le mx^2<2$ implies \( m^{3/2}\ge x^{-3}.\)
Hence,
\begin{equation}
    f(x) \ge \frac{c}{d(n-2)x^3},\qquad \frac2{\sqrt d}\le x\le\frac12 .
    \label{eq:mixture-density-lower}
\end{equation}
We now lower bound the distortion of the arbitrary quantizer $Q$. Let its alphabet be
\[  -1=q_0<q_1<\cdots<q_{K-1}=1.\]
Intersect the quantization cells with the interval \(  \left[\frac2{\sqrt d},\frac12\right],\) and denote the nonempty intersections by \( [a_s,b_s], s=1,\ldots,J.\)
Clearly, \(J\le K+1.\)  For $x\in[a_s,b_s]$, the pointwise lower bound from \eqref{eqn:opt_adj} gives
\begin{equation}\label{eqn:lower_VarQ}
    \operatorname{Var}(Q(x)\mid x)\ge(x-a_s)(b_s-x).
 \end{equation}
Combining \eqref{eq:mixture-density-lower} and \eqref{eqn:lower_VarQ} yields \( \mathbb E  \operatorname{Var}(Q(X)\mid X)  \ge  \frac{c_5}{d(n-2)} \sum_{s=1}^J \int_{a_s}^{b_s} \frac{(x-a_s)(b_s-x)} {x^3}\,dx.\)  For any $0<a<b$, direct integration gives
\[\begin{aligned} \int_a^b \frac{(x-a)(b-x)}{x^3}\,dx  &=  \sinh\left(\log\frac ba\right) - \log\frac ba .
\end{aligned}\]
Since
\(\sinh(t)-t\ge\frac{t^3}{6}, t\ge0,\) we have \(  \int_a^b  \frac{(x-a)(b-x)}{x^3}\,dx \ge \frac16  \left(\log\frac ba\right)^3 .\) Define the logarithmic widths \(  L_s:=\log\frac{b_s}{a_s}.\)
Because the intervals $[a_s,b_s]$ partition
\( \left[\frac2{\sqrt d},\frac12\right],\) 
\(\sum_{s=1}^J L_s  =\log\frac{\sqrt d}{4}. \) By Jensen's inequality,
\[\sum_{s=1}^J L_s^3\ge\frac{\left(\sum_{s=1}^J L_s\right)^3}{J^2}\ge\frac1{(K+1)^2}\left(\log\frac{\sqrt d}{4}\right)^3.\]
Therefore, 
\(\mathbb E\operatorname{Var}(Q(X)\mid X)\ge\frac{c_6}{d(n-2)(K+1)^2}\left( \log\frac{\sqrt d}{4}\right)^3.\) Since $d=2^n$,
\( n-2\asymp\log d,\) and \( \log(\sqrt d/4)\asymp\log d\) for sufficiently large $d$. Hence, \[\mathbb E\operatorname{Var}(Q(X)\mid X)\ge C_1 \frac{\log^2 d}{d(K+1)^2}.\]
This lower bound holds for every admissible quantizer $Q$. Taking the infimum over $Q$ completes the proof.
\end{proof}

\subsection{Proof of \Cref{thm:ls}}\label{sec:proof-ls}
For $s\ge1$, let $\delta_s$ denote the rank-$s$ restricted isometry constant of $\A_M$. The proof relies on \Cref{lem:pauli-rip}, which establishes the Pauli RIP, and \Cref{lem:score}, which controls the score at the true state. 

\begin{lemma}\cite[Theorem~2.1]{liu2011universal}\label{lem:pauli-rip}
There exists an absolute constant $c_4>0$ such that the following holds. Suppose that $1\le s\le d$, $\delta\in(0,1)$, and $\eta\in(0,1)$. If
\begin{equation}\label{eq:pauli-rip-sample-proof} 
M\ge c_4\delta^{-2}sd\log^6(2d)\log(2/\eta),
\end{equation}
then, with probability at least $1-\eta$,
\begin{equation*}
(1-\delta)\norm{\bH}_{\F}^2\le\norm{\A_M(\bH)}_2^2\le(1+\delta)\norm{\bH}_{\F}^2
\end{equation*}
holds simultaneously for all Hermitian matrices $\bH$ of rank at most $s$.
\end{lemma}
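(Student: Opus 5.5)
This lemma is the Pauli restricted isometry property of \citet{liu2011universal}, which we invoke as stated. If we were to reprove it, we would follow the Rudelson--Vershynin strategy for bounded orthonormal systems. Write $\bP_i/\sqrt d$ for the normalized Pauli matrices. These form an orthonormal basis of $\mathbb H_d$ satisfying the incoherence bound $\norm{\bP_i/\sqrt d}_{\op}=1/\sqrt d$. Moreover, $\norm{\A_M(\bH)}_2^2=\frac1M\sum_{i=1}^M d\,\Tr(\bP_i\bH)^2$ is an average of i.i.d.\ terms with mean $\norm{\bH}_{\F}^2$. Set $U_s=\{\bH\in\mathbb H_d:\rank(\bH)\le s,\ \norm{\bH}_{\F}=1\}$ and
\[
\delta_s:=\sup_{\bH\in U_s}\abs{\norm{\A_M(\bH)}_2^2-\norm{\bH}_{\F}^2}.
\]
The plan has two stages: first bound $\Ee\,\delta_s$, then upgrade to a high-probability statement.

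\emph{Expectation bound.} By symmetrization,
\[
\Ee\,\delta_s\le 2\,\Ee\sup_{\bH\in U_s}\abs{\frac1M\sum_i\varepsilon_i\, d\,\Tr(\bP_i\bH)^2},
\]
where the $\varepsilon_i$ are Rademacher signs. Conditionally on the Pauli draws, this is a Rademacher process on $U_s$. Its natural metric is controlled, via Cauchy--Schwarz, by $\sqrt{(1+\delta_s)/M}$ times the seminorm $\norm{\bH}_X:=\max_i\sqrt d\,\abs{\Tr(\bP_i\bH)}$. Dudley's entropy integral then bounds the process by
\[
\sqrt{\tfrac{1+\delta_s}{M}}\int_0^{\infty}\sqrt{\log\mathcal N(U_s,\norm{\cdot}_X,t)}\,dt.
\]
Since $U_s\subset\sqrt s\,B_*$, where $B_*$ is the nuclear-norm unit ball, it suffices to cover $\sqrt s\,B_*$.

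\emph{Covering numbers (main obstacle).} This is the step we expect to be hardest. For large $t$ we would use Maurey's empirical method. Any element of $B_*$ is a convex combination of signed rank-one projectors $\pm\bm v\bm v^\dagger$, and each of these satisfies $\norm{\bm v\bm v^\dagger}_X\le\sqrt d$ by incoherence. Sampling $L\asymp d\log M/t^2$ such atoms and applying a maximal inequality over the $M$ coordinates gives
\[
\log\mathcal N\lesssim \frac{s\,d}{t^2}\log M\,\log d.
\]
For small $t$ we would use a volumetric bound of order $s\,d\log(1/t)$. Splitting the integral at $t\asymp1/\sqrt d$ yields
\[
\Ee\,\delta_s\lesssim\sqrt{\frac{(1+\Ee\,\delta_s)\,s d\log^{c}d}{M}}.
\]
Solving this quadratic-type inequality shows $\Ee\,\delta_s\le\delta/2$ once $M\gtrsim\delta^{-2}sd\log^6(2d)$. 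Tracking the polylog factors is the delicate part; a crude Maurey estimate might cost an extra log, and we would first try to absorb it by bounding $\log M\lesssim\log d$ in the relevant regime.

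\emph{High-probability upgrade.} To pass from expectation to probability, we would apply Talagrand's concentration inequality for suprema of empirical processes, in Bousquet's form. The summands satisfy $\abs{d\,\Tr(\bP_i\bH)^2}\le d\,\norm{\bH}_*^2/d\le s$, so they are uniformly bounded, and their variance proxy is similarly controlled. This gives a deviation of $\delta/2$ with probability at least $1-\eta$ provided $M\gtrsim\delta^{-2}sd\log(2/\eta)$. Combining this with the expectation bound yields the stated requirement \eqref{eq:pauli-rip-sample-proof}. The two-sided RIP inequality then holds simultaneously for all Hermitian $\bH$ of rank at most $s$ by homogeneity in $\norm{\bH}_{\F}$.
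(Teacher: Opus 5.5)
Citing Liu's Theorem~2.1 is exactly what the paper does: the lemma is imported by citation and no proof is given. At that level your proposal coincides with the paper.

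The reproof you sketch, however, does not go through at the step you yourself flag, and the loss there is a factor of $d$, not an extra logarithm. In the sparse-vector argument, the extreme points of the $\ell_1$ ball are the $2n$ signed coordinate vectors. Maurey's method therefore produces at most $(2n)^L$ candidate averages, and $\log\mathcal N\lesssim L\log n$. Here the atoms $\pm\bm v\bm v^\dagger$ range over a continuum, so the averages $\frac1L\sum_{j\le L}\pm\bm v_j\bm v_j^\dagger$ cannot simply be counted. You have to discretize each $\bm v_j$ on a net of the unit sphere, using $\norm{\bm v\bm v^\dagger-\bm w\bm w^\dagger}_X\le 2\sqrt d\,\norm{\bm v-\bm w}_2$. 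This costs log-cardinality of order $d\log(d/t)$ per atom, not $\log d$. With $L\asymp sd\log M/t^2$ you then get
\[
\log\mathcal N(\sqrt s\,B_*,\norm{\cdot}_X,t)\lesssim \frac{sd^2\log M\log(d/t)}{t^2}.
\]
This inflates Dudley's integral by $\sqrt d$ and yields only $M\gtrsim sd^2\,\mathrm{polylog}(d)$, which is no better than ignoring the rank constraint. Nothing in your sketch supplies the $\cO(\log d)$-per-atom count that your claimed bound $\log\mathcal N\lesssim sd\log M\log d/t^2$ silently assumes.

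In outline, Liu avoids this through what his abstract calls entropy duality. The polar of the $\norm{\cdot}_X$ unit ball is $\operatorname{conv}\{\pm\sqrt d\,\bP_i\}_{i\le M}$, which has only $2M$ extreme points. Maurey's method can therefore be run on the dual side with this finite dictionary, covering it in the operator norm (the dual of the nuclear norm), whose type-2 constant is only $\cO(\sqrt{\log d})$. A duality theorem for covering numbers then transfers the bound back to $\mathcal N(B_*,\norm{\cdot}_X,t)$ at the price of further logarithms. Those extra logarithms are where the $\log^6 d$ in the lemma comes from.

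There is also a smaller slip in your concentration step. Since $\norm{\bP_i}_{\op}=1$, the summands satisfy
\[
d\,\Tr(\bP_i\bH)^2\le d\norm{\bH}_*^2\le sd,
\]
not $s$. With $b=\sigma^2\lesssim sd$, Bousquet's inequality still gives exactly the requirement $M\gtrsim\delta^{-2}sd\log(2/\eta)$ that you state, so that step survives once the bound is corrected.
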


\begin{lemma}\label{lem:score}
There exists an absolute constant $c_5>0$ such that the following holds. Suppose that $d\ge2$ and $\eta\in(0,1)$. Then, with probability at least $1-2\eta$,
\begin{align}
\norm{\nabla\mathcal L_M(\brhostar)}_{\op}&\le c_5d\max\left\{\sqrt{\frac{V_{\ell,K,d}\log(d/\eta)}M},\frac{\log(d/\eta)}M\right\},\label{eq:score-op}\\
\norm{\nabla\mathcal L_M(\brhostar)}_{(2r)}&\le c_5d\sqrt r\max\left\{\sqrt{\frac{V_{\ell,K,d}\log(d/\eta)}M},\frac{\log(d/\eta)}M\right\}.\label{eq:score-restricted}
\end{align}
\end{lemma}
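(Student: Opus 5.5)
The gradient at the true state is
\[
\nabla\mathcal L_M(\brhostar)=\frac dM\sum_{i=1}^M\bigl(\Tr(\bP_i\brhostar)-\widetilde Y_i\bigr)\bP_i=-\frac dM\sum_{i=1}^M\xi_i\bP_i,\qquad \xi_i:=\widetilde Y_i-\Tr(\bP_i\brhostar).
\]
The plan is to write it in this form and apply the matrix Bernstein inequality to the sum of independent Hermitian matrices $\bZ_i:=\xi_i\bP_i$.

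\textbf{Centering.} Independence holds because the triples $(\bP_i,\{Y_{ij}\}_j,\text{quantizer randomness})$ are i.i.d. across $i$. Mean preservation and the tower property give $\Ee[\xi_i\mid\bP_i]=0$, so $\Ee\bZ_i=0$.

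\textbf{Uniform bound.} Since $|\widetilde Y_i|\le1$ and $|\Tr(\bP_i\brhostar)|\le1$, we have $|\xi_i|\le2$ and $\norm{\bZ_i}_{\op}\le2$.

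\textbf{Variance.} Because $\bP_i^2=\bI$, we get $\bZ_i^2=\xi_i^2\bI$, so the matrix variance is $\sum_i\Ee\xi_i^2\,\bI$. Decompose
\[
\xi_i=(\widetilde Y_i-\overline Y_i)+(\overline Y_i-\Tr(\bP_i\brhostar)).
\]
The cross term vanishes after conditioning on $\overline Y_i$ and $\bP_i$, by mean preservation. The shot-noise term satisfies $\Ee[(\overline Y_i-\Tr(\bP_i\brhostar))^2\mid\bP_i]=(1-\Tr(\bP_i\brhostar)^2)/\ell\le1/\ell$. The quantization term is bounded by \eqref{eq:quant-var}. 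Together these give $\Ee\xi_i^2\le V_{\ell,K,d}$, so the variance proxy is $MV_{\ell,K,d}$.

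\textbf{Operator-norm bound \eqref{eq:score-op}.} Matrix Bernstein in dimension $d$ gives, with probability at least $1-2\eta$ (the factor $2d$ in the Bernstein tail is absorbed by taking $t\asymp\log(d/\eta)$),
\[
\Bigl\|\sum_i\bZ_i\Bigr\|_{\op}\lesssim\sqrt{MV_{\ell,K,d}\log(d/\eta)}+\log(d/\eta).
\]
Multiplying by $d/M$ yields \eqref{eq:score-op}.

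\textbf{Restricted-norm bound \eqref{eq:score-restricted}.} This follows deterministically from \eqref{eq:score-op} via $\norm{\bZ}_{(2r)}\le\sqrt{2r}\norm{\bZ}_{\op}$, with the $\sqrt2$ absorbed into $c_5$. Both bounds therefore hold on the same event of probability at least $1-2\eta$.

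The only delicate point is the variance computation. One must check that \eqref{eq:quant-var} is an unconditional bound, obtained from $\Ee\overline Y_i^2\le\nu_{\ell,d}$, so that it applies to $\Ee\xi_i^2$ averaged over $\bP_i$, which is exactly what Bernstein requires. One must also check that the cross term vanishes by iterated conditioning, with the quantizer randomness independent given $\overline Y_i$. Finally, the bound $|\xi_i|\le2$ is what produces the $\log(d/\eta)/M$ term.
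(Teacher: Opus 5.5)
Your proposal is correct and follows essentially the same route as the paper. You define $\xi_i=\widetilde Y_i-\Tr(\bP_i\brhostar)$ and split it into shot noise plus quantization error with a vanishing cross term, which gives $\Ee\xi_i^2\le V_{\ell,K,d}$. You then apply Hermitian matrix Bernstein using $|\xi_i|\le2$ and $\bP_i^2=\bI$, and pass to the $(2r)$-norm via $\norm{\bZ}_{(2r)}\le\sqrt{2r}\norm{\bZ}_{\op}$. Your explicit checks that \eqref{eq:quant-var} is an unconditional bound, and that the shot-noise variance is $(1-\Tr(\bP_i\brhostar)^2)/\ell$, are slightly more careful than the paper's write-up but do not change the argument.
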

\begin{proof}
We first note that, for every Hermitian matrix $\bZ$ and $s\ge1$,
\begin{equation}
    \norm{\bZ}_{(s)} =\Big(\sum_{j=1}\sigma_j^2(\bZ)\Big)^{1/2}  \le \sqrt{s}\,\norm{\bZ}_{\op}. \label{eq:restricted-op}
\end{equation}

Set
\( \xi_i:=  \widetilde Y_i-\Tr(\bP_i\brhostar).\) By the mean-preserving property, \(  \Ee[\xi_i\mid\bP_i]=0.\) Moreover, \(\xi_i =\bigl(Y_i-\Tr(\bP_i\brhostar)\bigr)   +\bigl(\widetilde Y_i-Y_i\bigr).\) Since \(  \Ee[\widetilde Y_i-Y_i\mid Y_i,\bP_i]=0, \)
the cross term vanishes, and therefore
\[ \Ee\xi_i^2 =\Ee\bigl(Y_i-\Tr(\bP_i\brhostar)\bigr)^2 +\Ee(\widetilde Y_i-Y_i)^2 \le\frac1\ell +2\exp\left( \frac{4A_{\ell,d}}{K-1}\right) \frac{\nu_{\ell,d}A_{\ell,d}^2}{(K-1)^2}= V_{\ell,K,d}.\] 
Also, since both $\widetilde Y_i$ and $\Tr(\bP_i\brhostar)$ belong to $[-1,1]$, \(|\xi_i|\le2.\) Define \(\bX_i:=-\frac dM\xi_i\bP_i.\) The matrices $\bX_i$ are independent, mean-zero, and Hermitian. Moreover, \( \norm{\bX_i}_{\op}\le\frac{2d}{M}.\) Since $\bP_i^2=\bI$, \(\bX_i^2=\frac{d^2}{M^2}\xi_i^2\bI,\) and hence
\[\sum_{i=1}^M\Ee\bX_i^2\preceq M^{-1}d^2V_{\ell,K,d}\bI.\]
Thus \(\left\|\sum_{i=1}^M\Ee\bX_i^2   \right\|_{\op}\le \frac{d^2V_{\ell,K,d}}{M}.\)  Applying the Hermitian matrix Bernstein inequality gives, with probability at least $1-2\eta$,
\[    \left\|      \sum_{i=1}^M\bX_i \right\|_{\op} \le c_5d \max\left\{   \sqrt{\frac{V_{\ell,K,d}\log(d/\eta)}{M}}, \frac{\log(d/\eta)}{M}  \right\}.\]
On the other hand,  \( \sum_{i=1}^M\bX_i =\frac dM\sum_{i=1}^M \bigl(\Tr(\bP_i\brhostar)-\widetilde Y_i \bigr)\bP_i = \nabla\mathcal L_M(\brhostar),\) which proves \eqref{eq:score-op}.
Finally, applying \eqref{eq:restricted-op} with $s=2r$ gives 
\(\norm{\nabla\mathcal L_M(\brhostar)}_{(2r)}  \le\sqrt{2r}\, \norm{\nabla\mathcal L_M(\brhostar)}_{\op}.\) After enlarging the absolute constant $c_5$ if necessary, this yields \eqref{eq:score-restricted}.
\end{proof}

\begin{proof}[Proof of \Cref{thm:ls}]
The proof combines the restricted isometry of the Pauli sampling operator with the concentration bound for the score at the true state.  We first apply \Cref{lem:pauli-rip} with
\(s=\min\{2r,d\}, \delta=\frac12, \eta=d^{-11}.\)
Since $s\le 2r$, $\log(2d)\le 2\log d$, and $\log(2d^{11})\le 12\log d$ for $d\ge2$,   \eqref{eq:pauli-rip-sample-proof} follows from
\(M\ge C_2rd\log^7 d\)
whenever $C_2$ is sufficiently large. Consequently, with probability at least $1-d^{-11}$, \begin{equation}
\|\A_M(\bH)\|_2^2\ge\frac12\norm{\bH}_{\F}^2 \label{eq:ls-rip-event}
\end{equation}
holds simultaneously for every Hermitian matrix $\bH$ with $\rank(\bH)\le 2r$.

We next apply \Cref{lem:score} with $\eta=d^{-12}$. Since \(\log(d/\eta)=13\log d,\) with probability at least $1-2d^{-12}$,
\begin{equation}
\norm{\nabla\mathcal L_M(\brhostar)}_{(2r)}\le 13c_5d\sqrt r\max\left\{M^{-1}V_{\ell,K,d}\log d,M^{-1}\log d\right\}. \label{eq:ls-score-event}
\end{equation}
We now  consider the intersection of the two events above and set \(\bDelta=\widehat{\brho}_{\mathrm{LS}}-\brhostar\). Since both $\widehat{\brho}_{\mathrm{LS}}$ and $\brhostar$ belong to $\D_r$, we have  \(\rank(\bDelta)\le \rank(\widehat{\brho}_{\mathrm{LS}})+\rank(\brhostar)\le 2r\). Hence \eqref{eq:ls-rip-event} applies to $\bDelta$.

Since $\brhostar$ is feasible and $\widehat{\brho}_{\mathrm{LS}}$ is a global minimizer,  \(\mathcal L_M(\widehat{\brho}_{\mathrm{LS}})\le\mathcal L_M(\brhostar).\)  Using $\widehat{\brho}_{\mathrm{LS}}=\brhostar+\bDelta$, we therefore have
\begin{equation}
0\ge\mathcal L_M(\brhostar+\bDelta)-\mathcal L_M(\brhostar).\label{eq:ls-basic-start} \end{equation}
By the definition of $\mathcal L_M$ and $\nabla\mathcal L_M$, we have
\begin{align} \mathcal L_M(\brhostar+\bDelta)-\mathcal L_M(\brhostar)=\frac12\|\A_M(\bDelta)\|_2^2+\ip{\nabla\mathcal L_M(\brhostar)}{\bDelta}. \label{eq:ls-loss-expansion}
\end{align}
Combining \eqref{eq:ls-basic-start} and \eqref{eq:ls-loss-expansion} yields \(
\frac12\|\A_M(\bDelta)\|_2^2\le\left| \ip{\nabla\mathcal L_M(\brhostar)}{\bDelta}\right|. \)
Since $\rank(\bDelta)\le 2r$, the restricted isometry bound \eqref{eq:ls-rip-event} yields
\begin{equation}
\frac14\norm{\bDelta}_{\F}^2\le\left|\ip{\nabla\mathcal L_M(\brhostar)}{\bDelta}\right|.\label{eq:ls-basic-rip}
\end{equation}

It remains to control the inner product on the right-hand side. Since $\rank(\bDelta)\le2r$, von Neumann's trace inequality \citep{mirsky1975trace}
followed by Cauchy--Schwarz gives
\begin{align*}
\left|\ip{\bZ}{\bDelta}\right|&\le\sum_{j=1}^{\min\{2r,d\}}\sigma_j(\bZ)\sigma_j(\bDelta)\le\Bigg(\sum_{j=1}^{\min\{2r,d\}}\sigma_j^2(\bZ)\Bigg)^{1/2}\Bigg(\sum_{j=1}^{\min\{2r,d\}}\sigma_j^2(\bDelta)\Bigg)^{1/2}\cr
&=\norm{\bZ}_{(2r)}\norm{\bDelta}_{\F}.
\end{align*}
Taking $\bZ=\nabla\mathcal L_M(\brhostar)$ in \eqref{eq:ls-basic-rip}, we obtain $\frac14\norm{\bDelta}_{\F}^2\le\norm{\nabla\mathcal L_M(\brhostar)}_{(2r)}\norm{\bDelta}_{\F}$.  If $\bDelta=\bm0$, the result is immediate. Otherwise, dividing by $\norm{\bDelta}_{\F}$ gives $\norm{\bDelta}_{\F}\le 4\norm{\nabla\mathcal L_M(\brhostar)}_{(2r)}$. 
Substituting \eqref{eq:ls-score-event} therefore yields
\[\norm{\widehat{\brho}_{\mathrm{LS}}-\brhostar}_{\F}\le 52c_5d\sqrt r\max\left\{\sqrt{M^{-1}V_{\ell,K,d}\log d},M^{-1}\log d\right\}.
\]
Thus \eqref{eq:ls-bound} follows after enlarging $C_2$, if necessary, so that $C_2\ge52c_5$.

Finally, by the union bound, the probability that either the restricted isometry or the score event fails is at most \(d^{-11}+2d^{-12}\le d^{-10},\) which completes the proof.
\end{proof}

\subsection{Proof of \Cref{lem:density-projection}}\label{sec:proof_lmm41}
\begin{proof}[Proof of \Cref{lem:density-projection}] We first justify the spectral form \eqref{eq:density-projection-spectral}.
Let \[\bW=\sum_{j=1}^d\lambda_j(\bW)\bu_j\bu_j^\dagger, \qquad \lambda_1(\bW)\ge\cdots\ge\lambda_d(\bW).
\]
For any $\bX\in\D_r$, let $z_1\ge\cdots\ge z_r\ge0$ denote its
nonzero eigenvalues padded with zeros, so that $\sum_{j=1}^rz_j=1$.
By von Neumann's trace inequality,
$\Tr(\bW\bX)\le\sum_{j=1}^r\lambda_j(\bW)z_j$, with equality when
$\bX=\sum_{j=1}^rz_j\bu_j\bu_j^\dagger$. Hence
\[ \min_{\bX\in\D_r}\norm{\bW-\bX}_{\F}^2=\norm{\bW}_{\F}^2-\sum_{j=1}^r\lambda_j^2(\bW)+\min_{\substack{\bz\in\mathbb R_+^r\\ \bone^\top\bz=1}} \norm{\bz-(\lambda_1(\bW),\ldots,\lambda_r(\bW))^\top}_2^2,\]
where the ordering constraint on $\bz$ may be omitted because the Euclidean projection of the ordered vector $(\lambda_1(\bW),\ldots,\lambda_r(\bW))$ onto the simplex is itself ordered. This proves \eqref{eq:density-projection-spectral}. By optimality of the metric projection and feasibility of $\brho\in\D_r$, we have  $\norm{\Pi_{\D_r}(\bW)-\bW}_{\F}\le\norm{\brho-\bW}_{\F}$.   The triangle inequality therefore gives
\[\norm{\Pi_{\D_r}(\bW)-\brho}_{\F}\le\norm{\Pi_{\D_r}(\bW)-\bW}_{\F}+\norm{\bW-\brho}_{\F}\le2\norm{\bW-\brho}_{\F},\]
which proves \eqref{eq:density-projection-stability}. Now suppose that $\brho$ has rank $r$ and set $\varepsilon=\norm{\bW-\brho}_{\op}\le\lambda_r(\brho)/4$. Weyl's inequality gives
\(\lambda_r(\bW)\ge\lambda_r(\brho)-\varepsilon, \lambda_{r+1}(\bW)\le\varepsilon, \) 
so in particular $\lambda_r(\bW)>\lambda_{r+1}(\bW)$ and the leading $r$-dimensional eigenspace of $\bW$ is separated from the remainder. For $j\le r$, write $\lambda_j(\bW)=\lambda_j(\brho)+e_j$, where $|e_j|\le\varepsilon$, and define the candidate simplex threshold $\tau_0:=\frac{\sum_{j=1}^r\lambda_j(\bW)-1}{r}=\frac1r\sum_{j=1}^re_j$. Since $|\tau_0|\le\varepsilon$, we have, for every $j\le r$, \(\lambda_j(\bW)-\tau_0\ge\lambda_r(\bW)-\tau_0\ge\lambda_r(\brho)-2\varepsilon\ge\frac12\lambda_r(\brho)>0.\) Define $\mu_j=\lambda_j(\bW)-\tau_0$. Then $\mu_j>0$ for all $j\le r$ and, by the definition of $\tau_0$, $\sum_{j=1}^r\mu_j=\sum_{j=1}^r\lambda_j(\bW)-r\tau_0=1$.  Thus $\bmu=(\mu_1,\ldots,\mu_r)^\top$ lies in the interior of the probability simplex. Moreover, \[\mu_j-\lambda_j(\bW)+\tau_0=0,\qquad j=1,\ldots,r,\]
so $\bmu$, together with Lagrange multiplier $\tau_0$ and zero multipliers for the inactive nonnegativity constraints, satisfies the KKT conditions for
\[\min_{\substack{\bz\in\mathbb R_+^r, ~\bone^\top\bz=1}}\frac12 \norm{\bz-(\lambda_1(\bW),\ldots,\lambda_r(\bW))^\top}_2^2.
\]
The problem is strictly convex with a unique minimizer. Therefore, $\bmu$ is precisely the simplex projection appearing in \eqref{eq:density-projection-spectral}. Since every $\mu_j$ is strictly positive, $\Pi_{\D_r}\bW$ has exactly $r$ positive eigenvalues and hence $\rank \left(\Pi_{\D_r}\bW\right)=r$. This finishes the proof.
\end{proof}

\subsection{Proof of \Cref{thm:init}}\label{proofofthm:init}
\begin{proof}[Proof of \Cref{thm:init}]
We first decompose the backprojection error as
\begin{align}
\widetilde{\brho}_0-\brhostar=\frac1M\sum_{i=1}^M\left[d\Tr(\bP_i\brhostar)\bP_i-\brhostar\right]+\frac dM\sum_{i=1}^M\left[\widetilde Y_i-\Tr(\bP_i\brhostar)\right]\bP_i.\label{eq:init-error-decomposition-proof}
\end{align}
First, applying \eqref{eq:score-op} with $\eta=d^{-12}$ to the second term in \eqref{eq:init-error-decomposition-proof} gives
\begin{equation*}
\left\|\frac dM\sum_{i=1}^M\left[\widetilde Y_i-\Tr(\bP_i\brhostar)\right]\bP_i\right\|_{\op}\le 13c_5 d\left[\sqrt{\frac{V_{\ell,K,d}\log d}{M}}+\frac{\log d}{M}\right].
\end{equation*}
with probability at least $1-2d^{-12}$. For the first term in \eqref{eq:init-error-decomposition-proof}, Pauli isotropy gives zero mean, while Pauli orthogonality gives $\Ee_{\bP}\Tr(\bP\brhostar)^2=\frac{\Tr(\brhostar^2)}d$. A direct calculation then bounds the matrix variance by $d\Tr(\brhostar^2)$ and the operator norm of each summand by $2d$. Therefore, matrix Bernstein's inequality gives an absolute constant $c_6>0$ such that, with probability at least $1-2d^{-12}$,
\begin{equation*}
\left\|\frac1M\sum_{i=1}^M\left[d\Tr(\bP_i\brhostar)\bP_i-\brhostar\right]\right\|_{\op}\le c_6\left[\sqrt{\frac{d\Tr(\brhostar^2)\log d}{M}}+\frac{d\log d}{M}\right].
\end{equation*}
Combining the above estimates with \eqref{eq:init-error-decomposition-proof}, we obtain \eqref{eq:init-op} whenever $C_4\ge13c_5+c_6$. The union bound then gives probability at least $1-d^{-10}$. For the spectral truncation, Weyl's inequality gives $\lambda_r(\widetilde{\brho}_0)\ge \lambda_r(\brhostar)-\norm{\widetilde{\brho}_0-\brhostar}_{\op}\ge \frac34\lambda_r(\brhostar)>0$. Moreover, since $\brhostar$ is positive semidefinite with rank $r$, Weyl's inequality also yields
\[
\abs{\lambda_j(\widetilde{\brho}_0)}\le\norm{\widetilde{\brho}_0-\brhostar}_{\op}, \qquad j\ge r+1.
\]
It follows that $\lambda_r(\widetilde{\brho}_0)>\max_{j\ge r+1}\abs{\lambda_j(\widetilde{\brho}_0)}$. Therefore, the $r$ eigenvalues of $\widetilde{\brho}_0$ with largest magnitude are positive, and hence $\brho_0=\mathcal H_r(\widetilde{\brho}_0)$ is positive semidefinite with rank $r$. Since $\brhostar$ has rank $r$, the Eckart--Young--Mirsky theorem gives $\norm{\brho_0-\widetilde{\brho}_0}_{\op}
\le \norm{\brhostar-\widetilde{\brho}_0}_{\op}$. Therefore, by the triangle inequality,
\[\norm{\brho_0-\brhostar}_{\op}\le\norm{\brho_0-\widetilde{\brho}_0}_{\op}+\norm{\widetilde{\brho}_0-\brhostar}_{\op}\le 2\norm{\widetilde{\brho}_0-\brhostar}_{\op}.\]
Finally, since $\rank(\brho_0-\brhostar)\le 2r$, we obtain
\[
\norm{\brho_0-\brhostar}_{\F}\le\sqrt{2r}\,\norm{\brho_0-\brhostar}_{\op}\le 2\sqrt{2r}\,\norm{\widetilde{\brho}_0-\brhostar}_{\op}.\]
This completes the proof.
\end{proof}

\subsection{Deterministic contraction of QuantRGD}\label{sec:proof-local}
We establish the deterministic contraction for \Cref{alg:rgd}, which is a cornerstone in the proof of \Cref{thm:global recovery}.
\begin{lemma}\cite[Lemma~4.1]{wei2016guarantees}\label{lem:normal}
Let $\bDelta=\bX-\brhostar$, where $\bX\in\mathcal M_r^{\rm H}$. Then
\[\norm{(\Ical-\mathcal P_{T_{\bX}})\bDelta}_{\F}\le\frac{1}{\lambda_r(\brhostar)}\norm{\bDelta}_{\F}^2.\]
\end{lemma}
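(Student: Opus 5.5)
Write $\bX=\bU\bLambda\bU^\dagger$ with range projector $\bP_{\bX}=\bU\bU^\dagger$. The plan rests on the closed form \eqref{eq:tangent-projection-hermitian}, which gives
\[
(\Ical-\mathcal P_{T_{\bX}})\bZ=(\bI-\bP_{\bX})\bZ(\bI-\bP_{\bX})
\]
for every Hermitian $\bZ$. Since $(\bI-\bP_{\bX})\bX=\bm 0$, applying this to $\bDelta=\bX-\brhostar$ yields
\[
(\Ical-\mathcal P_{T_{\bX}})\bDelta=-(\bI-\bP_{\bX})\brhostar(\bI-\bP_{\bX}).
\]
So it suffices to bound the compression of $\brhostar$ onto the orthogonal complement of $\operatorname{range}(\bX)$.

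Next I factor $\brhostar$ through its own inverse on its range. Write $\brhostar=\bU_\star\bLambda_\star\bU_\star^\dagger$ with $\bLambda_\star$ invertible; this uses that $\brhostar$ has rank $r$, which is implicit since $\lambda_r(\brhostar)$ appears in the bound. Then
\[
\brhostar=\brhostar\,\bU_\star\bLambda_\star^{-1}\bU_\star^\dagger\,\brhostar.
\]
In the identity $(\bI-\bP_{\bX})\brhostar=-(\bI-\bP_{\bX})\bDelta$, the same annihilation $(\bI-\bP_{\bX})\bX=\bm 0$ is what replaces $\brhostar$ by $-\bDelta$ on each side. Hence
\[
(\bI-\bP_{\bX})\brhostar(\bI-\bP_{\bX})
=(\bI-\bP_{\bX})\bDelta\,\bU_\star\bLambda_\star^{-1}\bU_\star^\dagger\,\bDelta(\bI-\bP_{\bX}).
\]

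The estimate then follows from submultiplicativity, using $\norm{\bA\bB}_{\F}\le\norm{\bA}_{\F}\norm{\bB}_{\op}$ and $\norm{\bA\bB}_{\F}\le\norm{\bA}_{\op}\norm{\bB}_{\F}$:
\[
\norm{(\Ical-\mathcal P_{T_{\bX}})\bDelta}_{\F}\le\norm{\bDelta}_{\F}\,\norm{\bLambda_\star^{-1}}_{\op}\,\norm{\bDelta}_{\op}\le\frac{\norm{\bDelta}_{\F}^2}{\lambda_r(\brhostar)}.
\]
Here I use $\norm{\bI-\bP_{\bX}}_{\op}\le1$, $\norm{\bLambda_\star^{-1}}_{\op}=1/\lambda_r(\brhostar)$ (since $\brhostar\succeq0$), and $\norm{\bDelta}_{\op}\le\norm{\bDelta}_{\F}$.

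The only delicate step is the factorization identity and the sign bookkeeping in the two-sided substitution. The rest is routine, and since $\bX$ and $\bDelta$ are Hermitian the argument carries over verbatim from the real case in \citep{wei2016guarantees}.
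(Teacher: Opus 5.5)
Your argument is correct, and it is essentially the proof of the cited result (Lemma~4.1 of Wei et al.), which the paper imports without proving. Both proofs reduce $(\Ical-\mathcal P_{T_{\bX}})\bDelta$ to $-(\bI-\bP_{\bX})\brhostar(\bI-\bP_{\bX})$, insert the pseudo-inverse identity $\brhostar=\brhostar\bU_\star\bLambda_\star^{-1}\bU_\star^\dagger\brhostar$, and replace $\brhostar$ by $-\bDelta$ on both sides to reach $\norm{\bDelta}_{\F}\norm{\bDelta}_{\op}/\lambda_r(\brhostar)$; your explicit requirement that $\brhostar$ have rank exactly $r$ is the correct reading and matches how the lemma is used in \Cref{lem:local}.
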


\begin{lemma}\label{lem:local}
Suppose that $\brhostar\in\D_r$ has rank $r$, $\brho_0\in\mathcal M_r^+$, and
\begin{equation}\label{eq:local-assumptions}
\delta_{4r}\le\frac1{64},
\qquad
\norm{\brho_0-\brhostar}_{\F}\le\frac1{16}\lambda_r(\brhostar),
\qquad
\norm{\nabla\mathcal L_M(\brhostar)}_{(2r)}\le\frac1{64}\lambda_r(\brhostar).
\end{equation}
Then $\brho_k$ is a rank-$r$ density matrix for every $k\ge1$, and
\begin{equation}\label{eq:local}\norm{\brho_k-\brhostar}_{\F}\le 4^{-k}\norm{\brho_0-\brhostar}_{\F}+4\norm{\nabla\mathcal L_M(\brhostar)}_{(2r)}.
\end{equation}
\end{lemma}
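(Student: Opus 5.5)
We argue by induction on $k$: a one-step contraction
\[
\norm{\brho_{k+1}-\brhostar}_{\F}\le \tfrac14\norm{\brho_k-\brhostar}_{\F}+3\norm{\nabla\mathcal L_M(\brhostar)}_{(2r)},
\]
valid while $\brho_k\in\mathcal M_r^+$ and $\norm{\brho_k-\brhostar}_{\F}\le\frac1{16}\lambda_r(\brhostar)$, gives \eqref{eq:local}. Unrolling yields $4^{-k}\norm{\brho_0-\brhostar}_{\F}+3\cdot\frac43\norm{\nabla\mathcal L_M(\brhostar)}_{(2r)}$. The third assumption in \eqref{eq:local-assumptions} keeps the iterates in the basin, since $\frac14\cdot\frac1{16}+\frac3{64}=\frac1{16}$.

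Fix $k$ and write $\bDelta_k=\brho_k-\brhostar$, $T=T_{\brho_k}$, $\bE=\nabla\mathcal L_M(\brhostar)=\A_M^*(\A_M\brhostar-\bm y)$.
\begin{enumerate}[(1)]
\item Since $\nabla\mathcal L_M(\brho_k)=\A_M^*\A_M\bDelta_k+\bE$, we get $\bG_k=\mathcal P_T\A_M^*\A_M\bDelta_k+\mathcal P_T\bE$.
\item Decompose
\[
\bW_k-\brhostar=\bigl(\mathcal P_T-\alpha_k\mathcal P_T\A_M^*\A_M\mathcal P_T\bigr)\bDelta_k+(\Ical-\mathcal P_T)\bDelta_k-\alpha_k\mathcal P_T\A_M^*\A_M(\Ical-\mathcal P_T)\bDelta_k-\alpha_k\mathcal P_T\bE .
\]
\item Every element of $T$ has rank at most $2r$, so the RIP with $\delta_{2r}\le\delta_{4r}$ gives $\norm{\mathcal P_T\A_M^*\A_M\mathcal P_T-\mathcal P_T}\le\delta_{2r}$. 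The step size satisfies $\alpha_k\in[1/(1+\delta_{2r}),1/(1-\delta_{2r})]$. Hence the first term is at most $c\,\delta_{2r}\norm{\bDelta_k}_{\F}$ with $c\approx 3$.
\item By \Cref{lem:normal} and $\norm{\bDelta_k}_{\F}\le\lambda_r(\brhostar)/16$, the second term is at most $\frac1{16}\norm{\bDelta_k}_{\F}$.
\item For the third term, $(\Ical-\mathcal P_T)\bDelta_k=-(\Ical-\mathcal P_T)\brhostar$ has rank at most $r$, and $T$-matrices have rank at most $2r$. The standard RIP cross-term bound $|\langle\A_M\bX,\A_M\bY\rangle-\langle\bX,\bY\rangle|\le\delta_{4r}\norm{\bX}_{\F}\norm{\bY}_{\F}$ for orthogonal $\bX,\bY$ then gives a bound of $\alpha_k\delta_{4r}\cdot\frac1{16}\norm{\bDelta_k}_{\F}$.
\item For the noise term, $\norm{\mathcal P_T\bE}_{\F}\le\sqrt2\norm{\bE}_{(2r)}$. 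Indeed, $\mathcal P_T\bE=\bP\bE+(\bI-\bP)\bE\bP$ consists of two pieces of rank at most $r$ whose Frobenius norms are each bounded by $\norm{\bE}_{(r)}$. Together with $\alpha_k\le 64/63$, the noise term is at most $1.5\norm{\bE}_{(2r)}$.
\end{enumerate}
With $\delta_{4r}\le1/64$, summing gives $\norm{\bW_k-\brhostar}_{\F}\le\frac18\norm{\bDelta_k}_{\F}+1.5\norm{\bE}_{(2r)}$.

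Applying \eqref{eq:density-projection-stability} then gives $\norm{\brho_{k+1}-\brhostar}_{\F}\le 2\norm{\bW_k-\brhostar}_{\F}$, which is the claimed contraction with factor $\frac14$ and additive $3\norm{\bE}_{(2r)}$. For the rank, note that $\norm{\bW_k-\brhostar}_{\op}\le\norm{\bW_k-\brhostar}_{\F}\le\frac18\cdot\frac{\lambda_r}{16}+\frac{1.5}{64}\lambda_r<\lambda_r(\brhostar)/4$. The second part of \Cref{lem:density-projection} therefore shows that $\brho_{k+1}$ is a rank-$r$ density matrix, so $\brho_{k+1}\in\mathcal M_r^+$ and the induction continues.

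The main obstacle is constant bookkeeping, chiefly the factor $2$ lost in the projection step. The sum of the $\delta$-terms and the curvature term from \Cref{lem:normal} must stay below $\frac18$, which is tight ($3/64+1/16\approx0.11$). If it fails, I would sharpen step (3) using the exact line-search identity, or replace the factor $2$ by a nonexpansiveness-type argument near $\brhostar$.
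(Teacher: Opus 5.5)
Your proposal is correct and follows the paper's proof essentially step for step. It uses the same decomposition of $\bW_k-\brhostar$, \Cref{lem:normal} for the normal part, RIP on the tangent space and for the cross term, and the step-size sandwich; it then applies the factor-$2$ projection stability and arrives at the same recursion $e_{k+1}\le\frac14e_k+3\norm{\nabla\mathcal L_M(\brhostar)}_{(2r)}$ with invariance of the basin. Your worry about tightness is unfounded: with the exact tangent-space factor $\frac{2\delta_{4r}}{1-\delta_{4r}}$ the $e_k$-coefficient is $\frac{2}{63}+\frac1{16}+\frac1{1008}=\frac{2}{21}<\frac18$, exactly as in the paper.

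Two small points. First, the paper gets the sharper bound $\norm{\mathcal P_{T_k}\nabla\mathcal L_M(\brhostar)}_{\F}\le\norm{\nabla\mathcal L_M(\brhostar)}_{(2r)}$ directly from $\rank(\mathcal P_{T_k}\nabla\mathcal L_M(\brhostar))\le 2r$. Your $\sqrt2$ version also works, but it relies on the Frobenius orthogonality of the two pieces, which you should state. Second, you should add a line for the degenerate case $\bG_k=\bm 0$. There $\alpha_k$ is undefined, but $\bW_k=\brho_k$, so your decomposition and bounds go through verbatim with, e.g., $\alpha_k=1$.
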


\begin{proof}
Let $\bDelta_k=\brho_k-\brhostar$, $e_k=\norm{\bDelta_k}_{\F}$, and $T_k=T_{\brho_k}\mathcal M_r^{\rm H}$. We prove by induction that $e_k\le\lambda_r(\brhostar)/16$. Set $\bDelta_k^\top=\mathcal P_{T_k}\bDelta_k$ and $\bDelta_k^\perp=(\Ical-\mathcal P_{T_k})\bDelta_k$. Since every matrix in $T_k$ has rank at most $2r$, while $\bDelta_k^\perp = -(\bI-\bU_k\bU_k^\dagger)\brhostar(\bI-\bU_k\bU_k^\dagger)$ has rank at most $r$, \Cref{lem:normal} gives
\begin{equation}\label{eq:local-normal-proof}
\norm{\bDelta_k^\perp}_{\F}\le\frac{e_k^2}{\lambda_r(\brhostar)}.
\end{equation}
Moreover, the rank-$4r$ RIP and polarization imply
\begin{equation}\label{eq:local-rip-proof}
\begin{split}
\norm{(\mathcal P_{T_k}\A_M^*\A_M\mathcal P_{T_k}-\mathcal P_{T_k})\bH}_{\F}&\le\delta_{4r}\norm{\bH}_{\F},\quad \bH\in T_k, \cr
\norm{\mathcal P_{T_k}\A_M^*\A_M\bDelta_k^\perp}_{\F}&\le\delta_{4r}\norm{\bDelta_k^\perp}_{\F},
\end{split}
\end{equation}
where the first bound follows because $\mathcal P_{T_k}\A_M^*\A_M\mathcal P_{T_k}-\mathcal P_{T_k}$ is self-adjoint on $T_k$ and the RIP controls its quadratic form; the second follows by polarizing the RIP between a rank-$2r$ element of $T_k$ and the rank-$r$ matrix $\bDelta_k^\perp$.

If $\bG_k=\bm0$, then \eqref{eq:local-rip-proof} gives $(1-\delta_{4r})\norm{\bDelta_k^\top}_{\F} \le\delta_{4r}\norm{\bDelta_k^\perp}_{\F}+\norm{\nabla\mathcal L_M(\brhostar)}_{(2r)}$.  Using \eqref{eq:local-normal-proof}, $e_k\le\lambda_r(\brhostar)/16$, and $\delta_{4r}\le1/64$, we obtain $e_k\le(64/59)\norm{\nabla\mathcal L_M(\brhostar)}_{(2r)}$; hence the claimed statistical bound already holds. We therefore assume $\bG_k\ne\bm0$ below. Since $\bG_k\in T_k$ has rank at most $2r$, exact line search and RIP give
\begin{equation}\label{eq:local-alpha-proof}
\frac1{1+\delta_{4r}}\le\alpha_k\le\frac1{1-\delta_{4r}}.
\end{equation}
The quadratic structure of $\mathcal L_M$ gives $\nabla\mathcal L_M(\brho_k)=\A_M^*\A_M\bDelta_k+\nabla\mathcal L_M(\brhostar)$. Hence, for $\bW_k=\brho_k-\alpha_k\bG_k$, we have
\begin{small}
\begin{align*}
\bW_k-\brhostar=(\Ical-\alpha_k\mathcal P_{T_k}\A_M^*\A_M\mathcal P_{T_k})\bDelta_k^\top+\bDelta_k^\perp-\alpha_k\mathcal P_{T_k}\A_M^*\A_M\bDelta_k^\perp-\alpha_k\mathcal P_{T_k}\nabla\mathcal L_M(\brhostar).
\end{align*}\end{small}%
By \eqref{eq:local-rip-proof}--\eqref{eq:local-alpha-proof}, we have
\begin{align*}
\norm{(\Ical-\alpha_k\mathcal P_{T_k}\A_M^*\A_M\mathcal P_{T_k})\bDelta_k^\top}_{\F}&\le\frac{2\delta_{4r}}{1-\delta_{4r}}\norm{\bDelta_k^\top}_{\F},\cr
\norm{\bDelta_k^\perp-\alpha_k\mathcal P_{T_k}\A_M^*\A_M\bDelta_k^\perp}_{\F}&\le\frac1{1-\delta_{4r}}\norm{\bDelta_k^\perp}_{\F}.
\end{align*}
Finally, $\mathcal P_{T_k}\nabla\mathcal L_M(\brhostar)$ has rank at most $2r$, and therefore $\norm{\mathcal P_{T_k}\nabla\mathcal L_M(\brhostar)}_{\F}\le\norm{\nabla\mathcal L_M(\brhostar)}_{(2r)}$. Combining these estimates with \eqref{eq:local-normal-proof} yields
\begin{equation}\label{eq:local-trial-proof}
\norm{\bW_k-\brhostar}_{\F}\le\frac{2\delta_{4r}}{1-\delta_{4r}}e_k+\frac1{1-\delta_{4r}}\frac{e_k^2}{\lambda_r(\brhostar)}+\frac1{1-\delta_{4r}}\norm{\nabla\mathcal L_M(\brhostar)}_{(2r)}.
\end{equation}
Using $\delta_{4r}\le1/64$ and $e_k\le\lambda_r(\brhostar)/16$, we further obtain
\begin{equation}\label{eq:local-trial-simple-proof}
\norm{\bW_k-\brhostar}_{\F}\le\frac2{21}e_k+\frac{64}{63}\norm{\nabla\mathcal L_M(\brhostar)}_{(2r)}.
\end{equation}
Together with \eqref{eq:local-assumptions}, this gives
\[\norm{\bW_k-\brhostar}_{\op}\le\norm{\bW_k-\brhostar}_{\F}\le\left(\frac1{168}+\frac1{63}\right)\lambda_r(\brhostar)<\frac14\lambda_r(\brhostar).\]
\Cref{lem:density-projection} therefore guarantees that $\brho_{k+1}=\Pi_{\D_r}(\bW_k)$ has rank $r$. Moreover, \eqref{eq:density-projection-stability} and \eqref{eq:local-trial-simple-proof} imply
\[
e_{k+1}\le\frac4{21}e_k+\frac{128}{63}\norm{\nabla\mathcal L_M(\brhostar)}_{(2r)}\le\frac14e_k+3\norm{\nabla\mathcal L_M(\brhostar)}_{(2r)}.\]
Hence, by \eqref{eq:local-assumptions}, we have
\[e_{k+1}\le\frac{\lambda_r(\brhostar)}{64}+\frac{3\lambda_r(\brhostar)}{64}=\frac{\lambda_r(\brhostar)}{16},\]
so the basin is invariant, and every projected iterate has rank $r$. Iterating the last recursion gives
\[e_k \le 4^{-k}e_0+3\norm{\nabla\mathcal L_M(\brhostar)}_{(2r)}\sum_{j=0}^{k-1}4^{-j}\le  4^{-k}e_0+4\norm{\nabla\mathcal L_M(\brhostar)}_{(2r)},\]
which proves \eqref{eq:local}.
\end{proof}

\subsection{Proof of \Cref{thm:global recovery}}\label{proofofglobalrecovery}
\begin{proof}[Proof of \Cref{thm:global recovery}]
  We first verify the restricted isometry condition. Apply \Cref{lem:pauli-rip} with $s=\min\{4r,d\}$, $\delta=1/64$, and $\eta=d^{-12}$. Since $s\le4r$, $\log(2d)\le2\log d$, and $\log(2d^{12})\le13\log d$, condition \eqref{eq:rgd-M-condition} implies \eqref{eq:pauli-rip-sample-proof} whenever $C_5$ is sufficiently large. Consequently, $\delta_{4r}\le\frac1{64}$ holds with probability at least $1-d^{-12}$.

We next verify the initialization condition. Substituting $V_{\ell,K,d}\le\frac2\ell$ and $N=M\ell$ into \eqref{eq:init-op} yields
\begin{equation*}
\norm{\widetilde{\brho}_0-\brhostar}_{\op}\le C_4\left[\sqrt{M^{-1}d\Tr(\brhostar^2)\log d}+\sqrt{2} d\sqrt{N^{-1}\log d}+M^{-1}d\log d\right].
\end{equation*}
For $C_5$ sufficiently large so that $C_4\left(\frac{1+\sqrt2}{\sqrt{C_5}}+\frac1{C_5}\right)\le \frac1{32\sqrt2}$, substituting the sample-size conditions \eqref{eq:rgd-M-condition}, \eqref{eq:rgd-N-condition} into the initialization bound above, we obtain
$\norm{\widetilde{\brho}_0-\brhostar}_{\op}\le\frac{\lambda_r(\brhostar)}{32\sqrt{2r}}$.
Together with \eqref{eq:init-frobenius-bound}, it gives
\begin{equation}\label{eq:end-frobenius-basin-proof}
\norm{\brho_0-\brhostar}_{\F}\le \frac1{16}\lambda_r(\brhostar).
\end{equation}

Finally, we verify the score condition. Applying \eqref{eq:score-restricted} with $\eta=d^{-12}$ and substituting $V_{\ell,K,d}\le\frac2\ell$ gives, with probability at least $1-2d^{-12}$, it holds
\begin{equation}\label{eq:end-score-bound-proof}
\norm{\nabla\mathcal L_M(\brhostar)}_{(2r)}\le13c_5d\sqrt r\left[\sqrt{2N^{-1}\log d}+M^{-1}\log d\right].
\end{equation}
For $C_5$ sufficiently large so that $13c_5\left(\sqrt{\frac2{C_5}}+\frac1{C_5}\right)\le \frac1{64}$, substituting \eqref{eq:rgd-M-condition}, \eqref{eq:rgd-N-condition} into \eqref{eq:end-score-bound-proof}, we obtain $\norm{\nabla\mathcal L_M(\brhostar)}_{(2r)}\le\frac1{64}\lambda_r(\brhostar)$.
Thus, applying \Cref{lem:local} and substituting \eqref{eq:end-score-bound-proof} into \eqref{eq:local}, we obtain \eqref{eq:global recovery} whenever $C_5\ge52\sqrt2\,c_5$. Finally, the union bound over the restricted isometry, initialization, and score events gives failure probability at most $2d^{-10}$.  
\end{proof}

\section{Conclusion}\label{sec:conclusion}
We studied low-rank quantum state tomography from finite-bit Pauli batch responses. We introduced HyperQuant, a mean-preserving hyperbolic quantizer that keeps the finite-bit responses conditionally unbiased while controlling quantization variance, and established minimax distortion guarantees. Exact mean preservation enables direct rank-constrained least-squares recovery, for which we derived nonasymptotic guarantees and a favorable bit--shot tradeoff relative to unquantized batch transmission. We further developed QuantRGD, a Riemannian gradient method with provable linear convergence to the statistical neighborhood under explicit resource conditions. Numerical experiments validate the quantization, recovery, and convergence performance predicted by the theory. 
{\small
\bibliographystyle{plain}
\bibliography{qst}

@article{wei2016guarantees,
  title={Guarantees of Riemannian optimization for low rank matrix recovery},
  author={Wei, Ke and Cai, Jian-Feng and Chan, Tony F. and Leung, Shingyu},
  journal={SIAM Journal on Matrix Analysis and Applications},
  volume={37},
  number={3},
  pages={1198--1222},
  year={2016},
}

@article{zyczkowski2001induced,
  title={Induced measures in the space of mixed quantum states},
  author={{\.Z}yczkowski, Karol and Sommers, Hans-J{\"u}rgen},
  journal={Journal of Physics A: Mathematical and General},
  volume={34},
  number={35},
  pages={7111--7125},
  year={2001}
}

@article{mirsky1975trace,
  title={A trace inequality of John von Neumann},
  author={Mirsky, Leon},
  journal={Monatshefte f{\"u}r Mathematik},
  volume={79},
  number={4},
  pages={303--306},
  year={1975},
  publisher={Springer}
}

@article{mezzadri2007,
  author  = {Francesco Mezzadri},
  title   = {How to Generate Random Matrices from the Classical Compact Groups},
  journal = {Notices of the American Mathematical Society},
  volume  = {54},
  number = {5},
  pages   = {592--604},
  year    = {2007}
}

@article{cai2010singular,
  title   = {A Singular Value Thresholding Algorithm for Matrix Completion},
  author  = {Cai, Jian-Feng and Cand{\`e}s, Emmanuel J. and Shen, Zuowei},
  journal = {SIAM Journal on Optimization},
  volume  = {20},
  number  = {4},
  pages   = {1956--1982},
  year    = {2010},
}

@article{reilly2015interface,
  author  = {Reilly, David J.},
  title   = {Engineering the quantum-classical interface of solid-state qubits},
  journal = {npj Quantum Information},
  volume  = {1},
  pages   = {15011},
  year    = {2015},
}

@article{danjou2021readout,
  author  = {D'Anjou, B.},
  title   = {Generalized figure of merit for qubit readout},
  journal = {Physical Review A},
  volume  = {103},
  number  = {4},
  pages   = {042404},
  year    = {2021},
}

@article{google2025qec,
  author  = {{Google Quantum AI and Collaborators}},
  title   = {Quantum error correction below the surface code threshold},
  journal = {Nature},
  volume  = {638},
  pages   = {920--926},
  year    = {2025},
}

@inproceedings{prathapan2022cryo,
  author    = {Prathapan, Mridula and Mueller, Peter and Heim, David and Oropallo, Maria Vittoria and Br{\"a}ndli, Matthias and Francese, Pier Andrea and Kossel, Marcel and Ruffino, Andrea and Zota, Cezar and Cha, Eunjung and Morf, Thomas},
  title     = {A system design approach toward integrated cryogenic quantum control systems},
  booktitle = {2022 IEEE 15th Workshop on Low Temperature Electronics (WOLTE)},
  pages     = {1--4},
  year      = {2022},
  publisher = {IEEE},
}

@article{carreravazquez2024realtime,
  author  = {Carrera Vazquez, Almudena and Tornow, Caroline and Rist{\`e}, Diego and Woerner, Stefan and Takita, Maika and Egger, Daniel J.},
  title   = {Combining quantum processors with real-time classical communication},
  journal = {Nature},
  volume  = {636},
  pages   = {75--79},
  year    = {2024},
}

@article{lybrand2019quantization,
  author  = {Lybrand, Eric and Saab, Rayan},
  title   = {Quantization for low-rank matrix recovery},
  journal = {Information and Inference: A Journal of the IMA},
  volume  = {8},
  number  = {1},
  pages   = {161--180},
  year    = {2019},
}

@article{gross2010qst,
  author  = {Gross, David and Liu, Yi-Kai and Flammia, Steven T. and Becker, Stephen and Eisert, Jens},
  title   = {Quantum state tomography via compressed sensing},
  journal = {Physical Review Letters},
  volume  = {105},
  number  = {15},
  pages   = {150401},
  year    = {2010},
}

@inproceedings{liu2011universal,
  author    = {Liu, Yi-kai},
  title     = {Universal low-rank matrix recovery from {Pauli} measurements},
  booktitle = {Advances in Neural Information Processing Systems 24},
  pages     = {1638--1646},
  year      = {2011},
  publisher = {Curran Associates, Inc.},
}

@article{liu2012compressed,
  author  = {Flammia, Steven T. and Gross, David and Liu, Yi-Kai and Eisert, Jens},
  title   = {Quantum tomography via compressed sensing: error bounds, sample complexity and efficient estimators},
  journal = {New Journal of Physics},
  volume  = {14},
  number  = {9},
  pages   = {095022},
  year    = {2012},
}

@article{xia2017pauli,
  author  = {Xia, Dong},
  title   = {Estimation of low rank density matrices by {Pauli} measurements},
  journal = {Electronic Journal of Statistics},
  volume  = {11},
  number  = {1},
  pages   = {50--77},
  year    = {2017},
}

@article{guta2020fast,
  author  = {Gu{\c{t}}{\u{a}}, M{\u{a}}d{\u{a}}lin and Kahn, Jonas and Kueng, Richard and Tropp, Joel A.},
  title   = {Fast state tomography with optimal error bounds},
  journal = {Journal of Physics A: Mathematical and Theoretical},
  volume  = {53},
  number  = {20},
  pages   = {204001},
  year    = {2020},
}

@inproceedings{zhang2013communication,
  author    = {Zhang, Yuchen and Duchi, John C. and Jordan, Michael I. and Wainwright, Martin J.},
  title     = {Information-theoretic lower bounds for distributed statistical estimation with communication constraints},
  booktitle = {Advances in Neural Information Processing Systems 26},
  pages     = {2328--2336},
  year      = {2013},
  publisher = {Curran Associates, Inc.},
}

@inproceedings{han2018geometric,
  author    = {Han, Yanjun and {\"O}zg{\"u}r, Ayfer and Weissman, Tsachy},
  title     = {Geometric lower bounds for distributed parameter estimation under communication constraints},
  booktitle = {Proceedings of the 31st Conference on Learning Theory},
  series    = {Proceedings of Machine Learning Research},
  volume    = {75},
  pages     = {3163--3188},
  year      = {2018},
  publisher = {PMLR},
}

@article{barnes2020fisher,
  author  = {Barnes, Leighton Pate and Han, Yanjun and {\"O}zg{\"u}r, Ayfer},
  title   = {Lower bounds for learning distributions under communication constraints via {Fisher} information},
  journal = {Journal of Machine Learning Research},
  volume  = {21},
  number  = {236},
  pages   = {1--30},
  year    = {2020},
}

@inproceedings{suresh2017mean,
  author    = {Suresh, Ananda Theertha and Yu, Felix X. and Kumar, Sanjiv and McMahan, H. Brendan},
  title     = {Distributed mean estimation with limited communication},
  booktitle = {Proceedings of the 34th International Conference on Machine Learning},
  series    = {Proceedings of Machine Learning Research},
  volume    = {70},
  pages     = {3329--3337},
  year      = {2017},
  publisher = {PMLR},
}

@misc{doosti2026distributed,
  author        = {Doosti, Mina and Sweke, Ryan and Wadhwa, Chirag},
  title         = {Distributed quantum property testing with communication constraints},
  year          = {2026},
}

@misc{mattig2026distributed,
  author        = {M{\"a}ttig-V{\'a}squez, Hans and Delgado, Aldo and Pereira, Luciano},
  title         = {Rigorous quantum state tomography for distributed quantum computing},
  year          = {2026},
}

@article{kyrillidis2018provable,
  author  = {Kyrillidis, Anastasios and Kalev, Amir and Park, Dohyung and Bhojanapalli, Srinadh and Caramanis, Constantine and Sanghavi, Sujay},
  title   = {Provable compressed sensing quantum state tomography via non-convex methods},
  journal = {npj Quantum Information},
  volume  = {4},
  pages   = {36},
  year    = {2018},
}

@article{kim2023fast,
  author  = {Kim, Junhyung Lyle and Kollias, George and Kalev, Amir and Wei, Ken X. and Kyrillidis, Anastasios},
  title   = {Fast quantum state reconstruction via accelerated non-convex programming},
  journal = {Photonics},
  volume  = {10},
  number  = {2},
  pages   = {116},
  year    = {2023},
}

@article{hsu2024rgd,
  author  = {Hsu, Ming-Chien and Kuo, En-Jui and Yu, Wei-Hsuan and Cai, Jian-Feng and Hsieh, Min-Hsiu},
  title   = {Quantum state tomography via nonconvex {Riemannian} gradient descent},
  journal = {Physical Review Letters},
  volume  = {132},
  number  = {24},
  pages   = {240804},
  year    = {2024},
}

@article{cai2013maxnorm,
  author  = {Cai, T. Tony and Zhou, Wen-Xin},
  title   = {A Max-Norm Constrained Minimization Approach to 1-Bit Matrix Completion},
  journal = {Journal of Machine Learning Research},
  volume  = {14},
  number  = {114},
  pages   = {3619--3647},
  year    = {2013},
}

@article{davenport2014onebit,
  author  = {Davenport, Mark A. and Plan, Yaniv and van den Berg, Ewout and Wootters, Mary},
  title   = {1-Bit matrix completion},
  journal = {Information and Inference: A Journal of the IMA},
  volume  = {3},
  number  = {3},
  pages   = {189--223},
  year    = {2014},
}

@article{klopp2015adaptive,
  author  = {Klopp, Olga and Lafond, Jean and Moulines, {\'E}ric and Salmon, Joseph},
  title   = {Adaptive Multinomial Matrix Completion},
  journal = {Electronic Journal of Statistics},
  volume  = {9},
  number  = {2},
  pages   = {2950--2975},
  year    = {2015},
}

@article{bhaskar2016probabilistic,
  author  = {Bhaskar, Sonia A.},
  title   = {Probabilistic Low-Rank Matrix Completion from Quantized Measurements},
  journal = {Journal of Machine Learning Research},
  volume  = {17},
  number  = {60},
  pages   = {1--34},
  year    = {2016},
}

@inproceedings{ni2016optimal,
  author    = {Ni, Renkun and Gu, Quanquan},
  title     = {Optimal Statistical and Computational Rates for One Bit Matrix Completion},
  booktitle = {Proceedings of the 19th International Conference on Artificial Intelligence and Statistics},
  series    = {Proceedings of Machine Learning Research},
  volume    = {51},
  pages     = {426--434},
  year      = {2016},
  editor    = {Gretton, Arthur and Robert, Christian C.},
  publisher = {PMLR},
  address   = {Cadiz, Spain},
}

@inproceedings{shen2019robust,
  author    = {Shen, Jie and Awasthi, Pranjal and Li, Ping},
  title     = {Robust Matrix Completion from Quantized Observations},
  booktitle = {Proceedings of the Twenty-Second International Conference on Artificial Intelligence and Statistics},
  series    = {Proceedings of Machine Learning Research},
  volume    = {89},
  pages     = {397--407},
  year      = {2019},
  editor    = {Chaudhuri, Kamalika and Sugiyama, Masashi},
  publisher = {PMLR},
}

@article{chen2023dithered,
  author  = {Chen, Junren and Wang, Cheng-Long and Ng, Michael K. and Wang, Di},
  title   = {High Dimensional Statistical Estimation under Uniformly Dithered One-bit Quantization},
  journal = {IEEE Transactions on Information Theory},
  volume  = {69},
  number  = {8},
  pages   = {5151--5187},
  year    = {2023},
}

@article{huang2018robust,
  author  = {Huang, Jian and Jiao, Yuling and Lu, Xiliang and Zhu, Liping},
  title   = {Robust Decoding from 1-Bit Compressive Sampling with Ordinary and Regularized Least Squares},
  journal = {SIAM Journal on Scientific Computing},
  volume  = {40},
  number  = {4},
  pages   = {A2062--A2086},
  year    = {2018},
}

@misc{ding2020nonquadratic,
  author        = {Ding, Lijun and Zhang, Yuqian and Chen, Yudong},
  title         = {Low-Rank Matrix Recovery with Non-Quadratic Loss: Projected Gradient Method and Regularity Projection Oracle},
  year          = {2020},
}

@article{smith2026provableEDMC,
  title={Provable non-convex {Euclidean} distance matrix completion: geometry, reconstruction, and robustness},
  author={Smith, Chandler and Cai, HanQin and Tasissa, Abiy},
  journal={IEEE Transactions on Information Theory},
  year={2026},
  publisher={IEEE}
}

@article{vandereycken2013low,
  title={Low-rank matrix completion by {Riemannian} optimization},
  author={Vandereycken, Bart},
  journal={SIAM Journal on Optimization},
  volume={23},
  number={2},
  pages={1214--1236},
  year={2013},
  publisher={SIAM}
}

@inproceedings{hamm2022RieCUR,
  title = {Riemannian {CUR} decompositions for robust principal component analysis},
  author = {Hamm, Keaton and Meskini, Mohamed and Cai, HanQin},
  booktitle = {Topological, Algebraic and Geometric Learning Workshops 2022},
  pages = {152--160},
  year = {2022},
  publisher = {PMLR}
}

@article{cai2019accelerated,
  title={Accelerated alternating projections for robust principal component analysis},
  author={Cai, HanQin and Cai, Jian-Feng and Wei, Ke},
  journal={Journal of Machine Learning Research},
  volume={20},
  number={20},
  pages={1--33},
  year={2019}
}
}

\end{document}